\title{Output-Oblivious Stochastic Chemical Reaction Networks}
\author{Ben Chugg}{The University of British Columbia, Canada}{ben.ih.chugg@gmail.com}{}{Supported by an NSERC Undergraduate Student Research Award.}
\author{Hooman Hashemi}{The University of British Columbia, Canada}{hhoomn390@gmail.com}{}{}
\author{Anne Condon}{The University of British Columbia, Canada}
{condon@cs.ubc.ca}{ https://orcid.org/0000-0003-1458-1259}{Supported by an NSERC Discovery Grant.}
\authorrunning{B. Chugg, H. Hashemi, A. Condon}
\subjclass{Theory of computation $\to$ Models of computation $\to$ Computability, Theory of computation $\to$ Formal languages and automata theory}
\keywords{Chemical Reaction Networks, Stable Function Computation, Output-Oblivious, Output-Monotonic}
\DeclareMathOperator{\dom}{\textnormal{Dom}}
\newtheorem{claim}{Claim}
\renewcommand{\S}{\mathcal{S}}
\newcommand{\y}{\vb{y}}
\newcommand{\on}{|}
\newcommand{\T}{\mathcal{T}}
\newcommand{\Z}{\mathbb{Z}}
\newcommand{\N}{\mathbb{N}}
\newcommand{\Grid}{\mathcal{G}}
\newcommand{\G}{\Grid}
\newcommand{\offset}{{\bf o}}
\newcommand{\p}{p} 
\newcommand{\Q}{\mathbb{Q}}
\newcommand{\x}{\vb{x}}
\newcommand{\n}{{\bf n}}
\newcommand{\nj}[1]{{\bf n^{(#1)}}}
\newcommand{\vp}{\varphi}
\renewcommand{\a}{\vb{a}}
\newcommand{\la}{\langle}
\newcommand{\ra}{\rangle}
\renewcommand{\v}{\vb{v}}
\renewcommand{\u}{\vb{u}}
\renewcommand{\c}{\vb{c}}
\newcommand{\lx}{l_{\bf x}}
\newcommand{\lz}{l_{\bf z}}
\newcommand{\bx}{{\bf x}}
\newcommand{\z}{{\bf z}}
\newcommand{\C}{\mathcal{C}} 
\newcommand{\ZZ}{{Z}}
\newcommand{\spec}{\mathcal{Z}}
\newcommand{\inp}{\mathcal{I}}
\newcommand{\outp}{\mathcal{O}}
\newcommand{\exec}{\mathcal{E}}
\newcommand{\reac}{\mathcal{R}}
\begin{document}

\maketitle

\begin{abstract}
We classify the functions $f:\N^2 \rightarrow \N$ which are stably computable by \emph{output-oblivious} Stochastic Chemical Reaction Networks (CRNs), i.e., systems of reactions in which output species are never reactants. While it is known that precisely the semilinear functions are stably computable by CRNs, such CRNs sometimes rely on initially producing too many output species, and then consuming the excess in order to reach a correct stable state. These CRNs may be difficult to integrate into larger systems: if the output of a CRN $\C$ becomes the input to a downstream CRN $\C'$, then $\C'$ could inadvertently consume too many outputs before $\C$ stabilizes. If, on the other hand, $\C$ is output-oblivious then $\C'$ may consume $\C$'s output as soon as it is available. In this work we 
prove that a semilinear function $f:\N^2 \rightarrow \N$ is stably computable by an output-oblivious CRN with a leader if and only if it is both increasing and either \emph{grid-affine} (intuitively, its domains are congruence classes), or the minimum of a finite set of \emph{fissure functions} (intuitively, functions behaving like the min function).
 \end{abstract}

\section{Introduction}
Stochastic Chemical Reaction Networks (CRNs)---systems of reactions involving chemical species---have traditionally been used to reason about extant physical systems, but are currently also of strong interest as a distributed computing model for describing molecular programs~\cite{chen2014deterministic,soloveichik}. They are closely related to Population Protocols \cite{alistarh2015fast,angluin2006,angluin2007computational,delporte2007secretive}, another distributed computing model; these models have found applications in areas as diverse as signal processing \cite{jiang2012digital}, graphical models \cite{napp2013message}, neural networks~\cite{hjelmfelt1991chemical}, and modeling cellular processes~\cite{arkin1998stochastic,cardelli2012cell}. 
CRNs can simulate 
Universal Turing Machines \cite{soloveichik,AAE06}. However, these simulations have drawbacks: the number of reactions or molecules may scale with the space usage and the computation is only correct with an arbitrarily small probability of error. If we require {\em stable computation}---that the CRN always eventually produces the correct answer---then Angluin et al.~\cite{angluin2007computational} showed that precisely the class of semilinear predicates can be stably computed. Chen et al.~\cite{chen2014deterministic} extended this result to show that precisely the semilinear functions can be stably computed. 

Recent advances in physical implementations of CRNs and, more generally, chemical computation using strand displacement systems (e.g., \cite{qian2011scaling,soloveichik2010dna,thachuk2012space,zhang2013integrating}) are a step towards the use of CRNs in biological environments and nanotechnology. As these systems become more complex, it may be necessary to integrate multiple, interacting CRNs in one system. However, current CRN constructions may perform poorly in such scenarios. 
As a concrete example, consider a CRN $\C$ given by the reactions $X\to 2Y$, $Y+L\to \emptyset$, where the system begins with $n$ copies of input species $X$, and one copy of $L$ (called the \emph{leader}). This CRN eventually produces $2n-1$ copies of output species $Y$, and so (stably) computes the function $n\mapsto 2n-1$. If another CRN $\C'$ uses the output of $\C$ as its input, and if the first reaction occurs $n$ times before the second occurs at all, then $\C'$ may consume all $2n$ copies of $Y$ and may thus itself produce an erroneous output.
Current CRN constructions circumvent this issue by using {\em diff-representation}, where the count $y$ of output species $Y$ of a CRN is represented indirectly as the difference $y=y^P-y^C$ between the counts of two species $Y^P$ and $Y^C$~\cite{chen2014deterministic}, rather than as the count of one output species $Y$. 
While these constructions enable the counts of both $Y^C$ and $Y^P$ to be non-decreasing throughout the computation, it is not immediately clear how a second CRN might use these two species reliably as input. 

More generally, if multiple function-computing CRNs comprise a larger system it can be desirable that no CRN ever produces a number of outputs that exceeds its function value. We might even demand more: that an output species of a CRN is never used as a reactant species, i.e., is never consumed. This ensures that any secondary CRN relying on the first's output can consume the output indiscriminately.

It is thus natural to ask: What functions can be stably computed in an {\em output-oblivious manner}, in which outputs are never reactants, without using diff-representation? 
 
This question is the focus of this paper. Doty and Hajiaghayi~\cite{doty-hajiaghayi-15} already observed that output-oblivious functions must not only be semilinear but also increasing, that is, $f(\n_1) \le f(\n_2)$ whenever $\n_1 \le \n_2$, but did not provide further insights. Chalk et al. \cite{Chalk-etal-2018} asked the same question but for a different model, namely mass-action CRNs. That model tracks real-valued species concentrations, unlike
the stochastic model in which configurations are vectors of species counts.
In contrast with the mass-action mode, leader molecules can play a very important role in the stochastic model, and we focus on the case where leaders are present. Mass-action CRN models cannot have leaders since there are no species counts.
Functions that are stably computable by output-oblivious mass-action CRNs must be super-additive \cite{Chalk-etal-2018}, that is $f(n) + f(n') \le f(n+n')$. Semilinear functions that are super-additive are a proper subset of the class of output-oblivious functions (characterized in this paper) that can be stably computed by stochastic CRNs with leaders.

\subsection{Our Results}
In this work we characterize the class of output-oblivious semilinear functions, i.e., those functions that can be stably computed by an output-oblivious stochastic CRN. We assume that one copy of a leader species is present initially in addition to the input. We focus on functions with two inputs and one output, since this case already is quite complex. Our results generalize trivially when there are more outputs since each output can be handled independently, and we believe that our techniques also generalize to multiple inputs.
 
 Our results also hold for Population Protocols, since stable function-computing CRNs can be translated into Population Protocols and vice versa.
 Section \ref{sec:prelims} introduces the relevant background in order to formally describe our results, but we describe them informally here.
 
Perhaps the simplest type of output-oblivious function with domain $\N^2$ is an affine function, such as $f(n_1,n_2) = 2n_1 + 3n_2 +1$ which could be computed by a CRN with reactions $L\to Y$, $X_1\to 2Y$ and $X_2\to 3Y$ where $L$ is a single leader. Here and hereafter, $X_i$ will typically correspond to the input species representing $n_i$.

In Section \ref{sec:sufficiency_proof} we show that an increasing function that can be specified as partial affine functions whose domains are different ``grids'' of $\N^2$ is also output-oblivious; for example, the function $f(n_1,n_2) = 2n_1 + 3n_2 +1$ when $n_1+n_2 = 0 \mod 2$, and  $f(n_1,n_2) = 2n_1 + 3n_2$ when $n_1+n_2 = 1 \mod 2$. 
More generally, a function that can be specified in terms of output-oblivious partial functions $f_i$, $1\le i \le k$, defined on different grids of $\N^2$, is output-oblivious. The grids may be 0-dimensional, in which case they are points; 1-dimensional in which case they are lines, or 2-dimensional.  We call such functions {\em grid-affine} functions. See Figure \ref{fig:grid_affine} for a slightly more complicated example of a grid-affine function, and a representation of its domains. We show how the CRNs for partial functions $f_i$ on the different grids can be ``stitched'' together to obtain an output-oblivious CRN for $f$.

\begin{figure}
\centering
\begin{subfigure}[]{0.44\textwidth}
\centering
\includegraphics[scale=0.4]{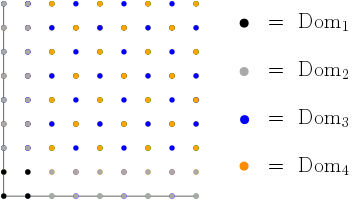}
\end{subfigure}
\begin{subfigure}[]{0.54\textwidth}
\centering
\[
f(\n) = \begin{cases} 
\vp_1(\n) = 2, & \n\in\dom_1,\\
\vp_2(\n) = n_1+n_2,& \n\in\dom_2,\\
\vp_3(\n) = n_1+2n_2+5, &\n\in\dom_3,\\
\vp_4(\n) = n_1+2n_2+4, &\n\in\dom_4.
\end{cases}
\]
\end{subfigure}
\caption{
Here we represent a grid-affine function $f:\N^2\rightarrow \N$ by its decomposition on different domains, all of which are grids. The domains of $f$ are illustrated on the left. Each black point is a zero-dimensional grid, while the grey points represent four one-dimensional grids, namely the lines $\{(\alpha,0)+(2,i):\alpha\in\N\}$ and $\{(0,\alpha)+(i,2):\alpha\in\N\}$ for $i=0,1$. The blue points represent points $(n_1,n_2)$ such that $n_1+n_2$ is even, and cover the union of two grids: $\{(2\alpha_1,2\alpha_2):\alpha_i\in\N\}\cup\{(2\alpha_1,2\alpha_2)+(1,1):\alpha_i\in\N\}$. Similarly, the gold points represent two grids.}
\label{fig:grid_affine}
\end{figure}

It is also straightforward to obtain an output-oblivious CRN for a function $f$ that is the min of a finite set of output-oblivious functions. In the simplest case, for example, $\min(n_1,n_2)$ can be computed as $X_1+X_2\to Y$. In our main positive result we describe a more general type of ``min-like'' function, which we call a {\em fissure function}, and we show how to construct output-oblivious CRNs for such functions. We give a very simple example of a fissure function and a corresponding output-oblivious CRN in Figure \ref{fig:simplefissure}.

However, constructing CRNs for other fissure functions appears to be significantly trickier than that shown in Figure \ref{fig:simplefissure}. Consider the function
$f(n_1,n_2) = 2n_1+3n_2+2$ if $n_1 > n_2$, $f(n_1,n_2) = 3n_1+2n_2+2$ if $n_1 < n_2$ and $f(n_1,n_2) = 5n_1$ on the "fissure line" $n_1 = n_2$. The simple line-tracking mechanism of the CRN of Figure \ref{fig:simplefissure} can't be used here because the affine functions for the "wedge" domains "$n_1>n_2$" and "$n_1<n_2$" depend both on $n_1$ and $n_2$. Also the function cannot be written as the sum of an increasing grid-affine function and an increasing simple fissure function of the type in Figure \ref{fig:simplefissure}, where the "above" function $\vp_A()$ depends only on $n_1$ and the "below" function $\vp_B()$ depends only on $n_2$. Our main positive result is a construction that can handle such fissure functions, as well as functions with multiple parallel fissure lines.

\begin{figure}
\begin{subfigure}[]{0.265\textwidth}
\includegraphics[scale=0.3]{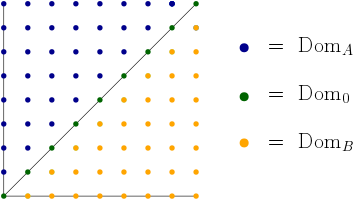}
\end{subfigure}
\begin{subfigure}[]{0.44\textwidth}
\[
f(\n) = \begin{cases}
\vp_A(\n)= n_1+1,    &\n\in\dom_A, \\
\vp_0(\n) = n_1,      & \n\in\dom_0, \\
\vp_B(\n) = n_2+1,    & \n\in\dom_B.
\end{cases}
\]
\end{subfigure}
\begin{subfigure}[]{0.2\textwidth}
\centering
\[
\begin{array}{lll}
{[0]} + X_1 &\rightarrow  {[A]} + Y \\
{[0]} + X_2 &\rightarrow {[B]} + Y \\
{[A]} + X_2 &\rightarrow {[0]} \\
{[B]} + X_1 &\rightarrow {[0]}
\end{array}
\]
\end{subfigure}
\caption{
A simple fissure function $f:\N^2\rightarrow \N$. On the left the three domains of $f$ are illustrated. There is one "fissure line" called $\dom_0$, and two "wedge" domains called $\dom_A$ and $\dom_B$ ("A" is above and "B" is below the fissure line). The function value on each of these domains is specified in the center. 
The function $f$ agrees with the function $\min\{n_1+1, n_2+1\}$
except that it dips down by 1 on the fissure line $\dom_0$. On the right is a CRN which stably computes $f$. 
In the CRN, the input $\n = (n_1,n_2)$ is represented as counts of species $X_1$ and $X_2$ and the leader is initially $[0]$. The three possible states ${[0], [A]}$ and ${[B]}$ of the leader track
whether the input lies on the fissure line $\dom_0$, which is the line where 
$\vp_A(\n)-\vp_B(\n)=0$, or whether the input lies above or below the fissure line, i.e., in domains $\dom_A$ or $\dom_B$ respectively. In this simple example, the CRN need not track how far above (or how far below) the fissure line an input might be, since the function $\vp_A$ does not depend on $n_2$ (and the function $\vp_B$ does not depend on $n_1$).}
\label{fig:simplefissure}
\end{figure}

In Section \ref{sec:necessity_proof} we present results on the negative side. A non-trivial example of a function that is not output-oblivious is the  maximum function. Intuitively, a CRN that attempts to compute the max would have to keep track of the relative difference of its two inputs in order to know when the count of one input overtakes the count of the other, and it's not possible to keep track of that difference with a finite number of states. Developing this intuition further, we show that an increasing semilinear function $f:\N^2 \rightarrow \N$ is output-oblivious if and only if $f$ is grid-affine or is the min of finitely many fissure functions. 

Putting both positive and negative results together, we state our main result here (see Section \ref{sec:prelims} for precise definitions of grid-affine and fissure functions). 

\vspace{.1in}

\begin{theorem}
\label{thm:output_oblivous_iff}
A semilinear function $f:\N^2\to\N$ is output-oblivious if and only if $f$ is increasing and is either grid-affine or the minimum of finitely many fissure functions. 
\end{theorem}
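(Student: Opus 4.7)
The plan is to prove the two directions separately. The ``only if'' direction combines three ingredients: the Doty-Hajiaghayi observation that output-oblivious functions are increasing, the Chen et al. result that stably computable functions are semilinear, and a structural decomposition argument specific to the output-oblivious setting. The ``if'' direction is purely constructive: one must build an output-oblivious CRN-with-leader for each grid-affine function and each fissure function, and must show that the min of such functions remains output-oblivious.

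For sufficiency, I would proceed in three steps. (a) Given an increasing grid-affine $f$, for each grid $\dom_j$ with restriction $\vp_j$, design a CRN whose leader cycles through states labeled by the grids of the decomposition, so that after consuming $\n$ the leader ends in the state $[j]$ corresponding to the grid containing $\n$, with exactly $\vp_j(\n)$ copies of $Y$; the increasing hypothesis is precisely what prevents the leader from ever needing to delete $Y$ when the state transitions between grids. (b) Given an increasing fissure function $f$ with wedge functions $\vp_A, \vp_B$ and fissure line $\dom_0$ carrying value $\vp_0$, build a CRN whose leader carries a bounded counter representing projection onto the direction normal to $\dom_0$; output production is scheduled so that only $\min(\vp_A,\vp_B)-\vp_0$ copies are ``withheld'' while the leader remains on the fissure, and are released once the leader is pushed into a wedge. (c) For $f = \min(f_1,\ldots,f_k)$, run the sub-CRNs in parallel on disjoint auxiliary species and produce a shared output $Y$ only when all sub-CRNs have independently signaled growth: the standard min-combinator adapted so that $Y$ never appears as a reactant.

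For necessity, assume $f$ is output-oblivious and hence increasing and semilinear. Write $f$ as a piecewise affine function on a finite semilinear partition of $\N^2$. If every piece is a grid, then $f$ is grid-affine. Otherwise, consider a pair of adjacent affine pieces $\vp_A$ and $\vp_B$ meeting along a line $\dom_0$. The output-obliviousness of any CRN $\C$ stably computing $f$, combined with the requirement that the count of $Y$ converge to $f(\n)$, forces $f(\n) \le \min(\vp_A(\n),\vp_B(\n))$ for $\n \in \dom_0$: any over-production generated from configurations indistinguishable from one side of the line cannot later be retracted. This yields a local fissure decomposition, and iterating over all such adjacencies globally expresses $f$ as the min of finitely many fissure functions.

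The main obstacle is twofold. On the constructive side, the CRN for a general fissure function is delicate because the wedge functions may depend on both inputs while the leader has only finite memory; output production must be scheduled so that the invariant $|Y| \le f(\n)$ holds at every reachable configuration, yet stabilization reaches exactly $f(\n)$. On the structural side, the key step is the local inequality $f \le \min(\vp_A,\vp_B)$ on $\dom_0$: this requires a reachability/indistinguishability argument showing that a CRN producing too many copies of $Y$ near the fissure cannot converge, leveraging crucially that $Y$ is never a reactant. Once these two difficulties are resolved, the equivalence of Theorem~\ref{thm:output_oblivous_iff} follows by assembling the pieces.
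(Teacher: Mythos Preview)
Your overall two-direction plan matches the paper's, but both directions have genuine gaps that the paper has to work harder to close.

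\textbf{Sufficiency, step (b).} The bounded-counter scheme you describe is essentially the mechanism of the paper's Figure~\ref{fig:simplefissure}, and the paper explicitly notes that this mechanism fails for general fissure functions. The issue is exactly the one you flag in your final paragraph but do not resolve: if $\vp_A=A_1n_1+A_2n_2+A_0$ and $\vp_B=B_1n_1+B_2n_2+B_0$ with $A_1\neq B_1$, then upon consuming a single $X_1$ the CRN does not know whether to emit $A_1$ or $B_1$ copies of $Y$, and the resulting ``deficit'' is not bounded by any function of the fissure-line index alone---it can grow linearly in the input. A finite leader state plus a bounded withholding buffer cannot absorb this. The paper's fix is a genuinely different idea: first materialize $\vp_A(\bx)$ and $\vp_B(\bx)$ as counts of two intermediate species $Z_1,Z_2$ (this step \emph{is} output-oblivious in $Z_1,Z_2$), and only then compute $\min\{z_1,z_2\}-d_i$ from those counts, tracking $z_1-z_2$ in a bounded window. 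Without some device of this kind, your step (b) does not go through. You also treat only a single fissure line $\dom_0$, whereas Definition~\ref{def:fissure_function} allows $2k-1$ parallel fissure lines; the construction must handle all of them simultaneously.

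\textbf{Necessity.} Two points. First, the dichotomy ``if every piece is a grid, then grid-affine; otherwise \ldots'' is not the right split: in the paper's semiaffine representation (Lemma~\ref{lem:semi_affine_function}) every domain already sits on a common grid, and the real distinction is whether $f$ restricted to each such grid is a single affine function or requires wedge pieces. Second, and more importantly, the inequality $f\le\min(\vp_A,\vp_B)$ on the separating line is necessary but far from sufficient to force the fissure structure of Definition~\ref{def:fissure_function}. You must also show (i) that on each line $L_i$ between the wedges, $f$ equals $\vp_A-d_i$ for a \emph{constant} $d_i\in\N$ (not merely $f\le\vp_A$), and (ii) that $\vp_A<\vp_B$ on $\dom_A$ and $\vp_B<\vp_A$ on $\dom_B$. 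Point (i) is Lemma~\ref{lem:line_offset} in the paper and uses that $f$ is increasing together with the constant-distance geometry; point (ii) requires a second Dickson-type impossibility argument (Lemma~\ref{lem:impossibility2}) beyond the one you sketch. Finally, ``iterating over all such adjacencies'' does not by itself yield a min decomposition: the paper groups the boundary and line domains into maximal parallel families $s$, builds one partial fissure function $f_s$ per family, and then proves $f=\min_s f_s$ by a separate argument (Claim~\ref{claim:min_fissures}) that again invokes the impossibility lemmas. Your proposal does not supply this grouping or the verification that the min is attained.
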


Since only semilinear functions are stably computable by CRNs, Theorem \ref{thm:output_oblivous_iff} provides a complete characterization of functions $f:\N^2\to\N$ which are output-oblivious. Moreover, in Section \ref{sec:necessity_proof}, we will prove that if $f$ is output-\emph{monotonic}, then it is either grid-affine or the minimum of fissure functions, a stronger statement than in Theorem \ref{thm:output_oblivous_iff}. 
A function is output-monotonic if it is stably computable by a CRN whose output count never decreases but unlike an output-oblivious CRN an output may act as a catalyst of a reaction, being both a reactant and product. 
For example, the CRN $X\to Y$, $L+Y\to 2Y$ which computes the function $n\mapsto n+1$ for $n \ge 1$ and $0 \mapsto 0$ is output-monotonic, but not output-oblivious. 
Thus, we also obtain a characterization for output-monotonic functions. 

To obtain our results, we provide new characterizations of semilinear sets and functions. We show that all semilinear sets can be written as finite unions of sets which are the intersection of grids and hyperplanes. Such sets are points, lines or wedges (pie-shaped slices) on 2D grids. Using this and the representation of semilinear functions as piecewise affine functions discovered by Chen et al.~\cite{chen2014deterministic}, we give a new representation of semilinear functions as ``periodic semiaffine functions'', essentially piecewise affine functions whose domains are points, lines or wedges. 

The rest of the paper is structured as follows. Section \ref{sec:prelims} provides the relevant technical background on CRNs, stable computation and semilinear functions. It also contains our new results on the structure of semilinear sets and functions, and rigorous definitions of grid-affine and fissure functions. 
In the remaining two sections we prove Theorem \ref{thm:output_oblivous_iff}, with Section \ref{sec:sufficiency_proof} providing explicit constructions of CRNs and Section \ref{sec:necessity_proof} proving that any function which is stably computable by an output-oblivious CRN obeys certain properties.

\section{Preliminaries}
\label{sec:prelims}
We begin by introducing Chemical Reaction Networks, and what it means for a CRN to stably compute a function. We then formally define grid-affine and fissure functions and, along the way, state new results concerning semilinear sets and functions. 

\subsection{Chemical Reaction Networks (CRNs)}
\label{sec:prelims_CRNS}
CRNs specify possible behaviours of systems of interacting \emph{species}. Let $\spec=\{\ZZ_1,\ldots,\ZZ_m\}$ be a finite set of species. 
At any given instant, the system is described by a configuration 
$\c\in\N^\spec$,
where $c(\ZZ_i)$ is the current count of the species $\ZZ_i\in\spec$ in the system.
The system's configuration changes by way of \emph{reactions}, each of which is described as a pair
$(\vb{s},\vb{t})=((s_1,\ldots,s_m),(t_1,\dots,t_m)) \in \N^\spec \times \N^\spec$ such that for at least one
$1\leq j\leq m$, $s_j\neq t_j$. 
Reaction $(\vb{s},\vb{t})$ can be written as
 \[\sum_{k:s_k> 0} s_k\ZZ_k\rightarrow \sum_{k:t_k>0} t_k\ZZ_k.\] 
The species $\ZZ_k$ with $s_k>0$ are the \emph{reactants}, which are {\em consumed}, while those with $t_k>0$ are the \emph{products} (if both $s_k>0$ and $t_k>0$ then species $Z_k$ is a catalyst). A CRN is thus formally described as a pair $\C=(\spec,\reac)$, where $\spec$ is a set of species, and $\reac$ a set of reactions.  
Reaction $r=(\vb{s},\vb{t})$ is \emph{applicable} to configuration $\c$ if $\vb{s}\leq \c$ (pointwise inequality), i.e., sufficiently many copies of each reactant are present. If applicable reaction $(\vb{s},\vb{t})$ occurs when the system is in configuration $\c=(c_1,\dots,c_m)$,
the new configuration is $\c'=(c_1-s_1+t_1,\dots,c_m-s_m+t_m)$. In this case we say 
that $\c'$ is \emph{directly reachable} from $\c$ and
write $\c\xrightarrow{r}\c'$. An \emph{execution} $\exec=\c_0,\dots,\c_t$ of $\C$ is a sequence of configurations of $\C$ such that $\c_i$ is directly reachable from $\c_{i-1}$ for $1 \le i \le t$. We say that $\c_t$ is \emph{reachable} from $\c_0$.  \\

\noindent
\textbf{Stable CRN Computation of functions with a leader.} 
Angluin et al. \cite{angluin2006} introduced the concept of stable computation of boolean predicates by population protocols,
and Chen et al. \cite{chen2014deterministic} adapted the notion to function computation by CRNs. While this paper focuses on two-dimensional domains, we present the following details in full generality.

Let $f:\N^k\to\N^\ell$ be a function. Formally, a \emph{Chemical Reaction Network (CRN) for computing $f$ with a leader} is $\C=(\spec,\reac,\inp,\outp, L)$, where $\spec$ is a set of  species, $\reac$ is a set of reactions, $\inp =\{X_1,X_2,\ldots,X_k\} \subseteq \spec$ is an ordered set of input species, $\outp=\{Y_1,Y_2,\ldots,Y_{\ell}\} \subseteq \spec$ is an ordered set of output species and $L$ is a leader species, $L \in \spec \setminus \inp$.

Function computation on input $\n = (n_1,\ldots,n_k) \in \N^k$ starts from a \emph{valid initial configuration} $\c_0$ of $\C$; namely a configuration in which the count of $L$ is 1, 
the count of species $X_i$ is $n_i$, and the count of any other species is 0. 
A \emph{computation} is an execution of $\C$ from a valid initial configuration to a stable configuration. A configuration $c$ is \emph{stable} if 
for every $\c'\in\N^m$ reachable from $\c$, $\c(Y)=\c'(Y)$ for all $Y\in \outp$. That is, once the system reaches configuration $\c$, the counts of the output species do not change. We say that $\C$ \emph{stably computes $f$} if for every valid initial configuration $\c_0$ and for every configuration $\c$ reachable from $\c_0$, there exists a
stable configuration $\c'$ reachable from $\c$ such that
$f(\c_0(X_1),\dots,\c_0(X_k))=(\c'(Y_1),\dots,\c'(Y_\ell))$. \\

\noindent
\textbf{Output-monotonic and output-oblivious CRNs.}
We say a CRN $\C$ is \emph{output-oblivious} if it never consumes any of its output species, and \emph{output-monotonic} if on all executions from a valid initial configuration, the count of any output species never decreases. As noted in the introduction, these notions are not equivalent. We say a function $f$ is \emph{output-oblivious} (\emph{monotonic}) if there exists an output-oblivious (monotonic) CRN which stably computes $f$. Our results show that the set of output-oblivious functions and output-monotonic functions are the same.

\subsection{Linear and Semilinear Sets; Lines, Grids, and Wedges}
\label{sec:prelims_sets}
For a vector $\v$, let $v_i$ denote its $i$th coordinate. 
Let $D\subseteq\N^2$ and let $\Pi_1$ and $\Pi_2$ denote the projection maps onto $x$ and $y$ axes, respectively. We say $D$ is \emph{two-way-infinite} if $|\Pi_1(D)|=|\Pi_2(D)|=\infty$, \emph{one-way-infinite} if either $|\Pi_1(D)|=\infty$ or $|\Pi_2(D)|=\infty$ but not both, and finite if $|\Pi_1(D)| < \infty$ and $|\Pi_2(D)|<\infty$. Also, if $A,B \subseteq \N^2$ and $\n\in\N^2$ we let $A+B=\{a+b:a\in A,b\in B\}$ and $A+\n=A+\{\n\}$. 

A set $E\subset\N^2$ is {\em linear} if $E=\{\sum_{i=1}^t \x_i\alpha_i+\offset:\alpha_i\in\N$\} for some $t\in\N$ and $\x_i,\offset\in\N^2$. 
If $t=1$ we say that $E$ is a \emph{line}. 
A set is \emph{semilinear} if it is the finite union of linear sets. 

A linear set $\G \subseteq \N^2$ is a \emph{grid} if there exist $p,q\in\N$ and $\offset\in\N^2$ such that $\G=\{(p,0)\alpha_1+(0,q)\alpha_2:\alpha_i\in\N\} +\offset =\{(p\alpha_1+o_1,q\alpha_2+o_2):\alpha_i\in\N\}$. 
If both $p$ and $q$ are zero, the grid is simply the point $\offset$. If $p>0$ and $q=0$, or $p=0$ and $q>0$, the grid is a one-way-infinite line with period $p$ or $q$ respectively. If $p=q>0$ we say that the grid is periodic, with period $p$. We let $\Grid_\p +\offset$ be the grid $\{(\alpha_1p,\alpha_2p):\alpha_i\in\N\}+\offset$ and write $\Grid_p$ if $\offset=(0,0)$. 

A \emph{threshold set} is a semilinear set with the form $\{\n:\n\cdot\v\geq r\}$ (i.e., a halfspace) for some $\v \in \Z^2$ and $r\in\Z$ \cite{angluin2007computational}. Let $E$ be a two-way-infinite linear set of the form $\G\cap \T$, where $\G$ is a grid and $\T$ is a finite intersection of threshold sets. $E$ is bounded by two lines (represented by threshold sets and/or lines parallel to the x or y axes; the points on these lines, if any, are in $E$). 
(Note that the boundary of a threshold set $\{\n:(n_1,n_2)\cdot(v_1,v_2)\geq r\}$ can be written as the linear set $\{(v_2,-v_1)\alpha+(|\min\{0,r\}|,\max\{0,r\}):\alpha\in\N\}$. For example, $\la n_1,n_2\ra \cdot\la -3,2\ra = -1$ is the set $(2,3)\alpha + (1,0)$. If a line is infinite then $(v_2,-v_1)\geq (0,0)$.)
If the two bounding lines are parallel, $E$ is the finite union of lines on $\Grid$, i.e., all points of each line lie on grid $\Grid$.
Otherwise 
  we call $E$ a \emph{wedge} on $\Grid$. For example, the sets $\{\n:n_1\geq n_2\}$ and $\{(1,1)\alpha_1+(1,2)\alpha_2:\alpha_i\in\N\}$ are wedges on $\Grid_1$. Likewise, the two regions above and below the fissure line in Figure~\ref{fig:simplefissure} are wedges on $\Grid_1$.
More generally, we can intuitively think of a wedge as a pie-like slice of $\N^2 \cap \Grid$, except that pieces may be chopped off near the narrow "corner" that is closest to the origin. If the two bounding lines are the x and y axes, the wedge is all of $\Grid$. We can show the following characterization of semilinear sets; see Appendix \ref{app:set_characterization} for the proof and a more formal definition of a wedge.

\begin{lemma}
\label{lem:semi_affine_characterization}
Every semilinear 
set can be represented as the finite union of points, lines on grids, and wedges on grids, 
with all grids having the same period. 
\end{lemma}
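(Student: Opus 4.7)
Since a semilinear set is a finite union of linear sets by definition, I plan to decompose each constituent linear set $L = \{\offset + \sum_{i=1}^t \x_i \alpha_i : \alpha_i \in \N\} \subseteq \N^2$ individually and then take the least common multiple of the resulting periods as a common period $p$. A classical refinement due to Eilenberg and Sch\"utzenberger lets me further assume that the period vectors $\x_1,\ldots,\x_t$ are linearly independent, so that $t \in \{0,1,2\}$. The cases $t=0$ (where $L$ is the single point $\{\offset\}$) and $t = 1$ (where $L$ is a one-way-infinite line whose points all lie on the translated grid $\Grid_p + \u$ for any $p$ that divides both components of $\x_1$) follow directly from the definitions.

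The substantive case is $t=2$. Let $\Lambda = \Z\x_1 + \Z\x_2$ be the sublattice of $\Z^2$ generated by the period vectors, let $m = |\det(\x_1,\x_2)| = [\Z^2 : \Lambda]$, and let $C = \R_{\geq 0}\x_1 + \R_{\geq 0}\x_2$ be the rational cone they span. Because $\x_1,\x_2$ form a $\Z$-basis of $\Lambda$, uniqueness of the integer representation combined with the $\R$-linear independence of $\x_1,\x_2$ gives $\N\x_1 + \N\x_2 = \Lambda \cap C$ exactly. Since $m\Z^2 \subseteq \Lambda$, the lattice $\Lambda$ is the disjoint union of exactly $m$ cosets of $m\Z^2$; picking representatives $\v_j \in [0,m)^2 \cap \Lambda$, I get $\Lambda \cap \N^2 = \bigsqcup_j (\Grid_m + \v_j)$. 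Translating by $\offset$ yields $L = \bigsqcup_j \bigl((\Grid_m + \offset + \v_j) \cap (\offset + C)\bigr)$; since $\offset + C$ is a finite intersection of halfspaces, each summand is the intersection of a translated period-$m$ grid with a finite intersection of threshold sets, which by the paper's definitions is either a point, a line on $\Grid_m$, or a two-way-infinite wedge on $\Grid_m$.

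To finish, I set $p$ to be the least common multiple of all moduli $m$ arising across the various linear sets. Any line or wedge on a grid $\Grid_m$ with $m\mid p$ can then be refined into a finite union of lines, wedges, and points on $\Grid_p$, because $\Grid_m$ is a finite disjoint union of translates of $\Grid_p$. The principal technical obstacle I anticipate is the careful treatment of boundary phenomena in the $t=2$ case, namely determining when the intersection $(\Grid_m + \v_j) \cap C$ is genuinely two-way-infinite (and hence a wedge), when it degenerates to a line (because it lies on a bounding ray of $C$), and when it reduces to finitely many ``corner'' points near $\offset$. Formalising the Eilenberg--Sch\"utzenberger reduction inside the paper's specific grid formalism, so that the periods align cleanly across all pieces, is where I expect the most delicate bookkeeping to lie.
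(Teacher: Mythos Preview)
The paper omits the proof of this lemma (it is one of the results deferred to the full version), so there is no in-paper argument to compare against directly. Your approach is sound and is essentially the natural one: reduce to linear sets with $\Q$-linearly independent period vectors, observe that in dimension two this forces $t\le 2$, and in the $t=2$ case express $\offset+\N\x_1+\N\x_2$ as the intersection of the lattice $\offset+\Lambda$ with the translated cone $\offset+C$, then split $\Lambda$ into its cosets modulo $m\Z^2$ to get period-$m$ grids. The key identity $\N\x_1+\N\x_2=\Lambda\cap C$ you use is correct precisely because $\x_1,\x_2$ form a $\Z$-basis of $\Lambda$, and your observation that $m\Z^2\subseteq\Lambda$ (via the adjugate) is the right reason the cosets align with period-$m$ grids.

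Two small points worth tightening. First, the attribution: the reduction to linearly independent periods is standard in the structure theory of semilinear sets, but is perhaps more directly associated with Ginsburg--Spanier or Ito than with Eilenberg--Sch\"utzenberger, whose contribution is the unambiguous decomposition; either way the fact you need is well known. Second, in your displayed decomposition you write $(\Grid_m+\offset+\v_j)\cap(\offset+C)$ where strictly you mean $((m\Z^2+\offset+\v_j)\cap(\offset+C))$; these coincide because $\x_1,\x_2\in\N^2$ forces $C\subseteq\R_{\ge 0}^2$ and $\v_j\in[0,m)^2$, but you should say so explicitly. The boundary-degeneracy bookkeeping you flag (wedge vs.\ line vs.\ finitely many points, and refining period $m$ to a common $p$) is indeed the only real work left, and your plan for handling it is correct.
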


\subsection{Semilinear, Semiaffine, Grid-Affine, and Fissure Functions}
\label{sec:prelims_functions}
For a function $f:\N^2\to\N$,
the \emph{restriction of $f$ to domain $D\subseteq \N^2$} is the partial function $f\on_D:D\to\N$ given by $f\on_D(\n)=f(\n)$ for all $\n\in D$. We say that $f:D \rightarrow \N$ is {\em (partial) affine} if $f(\n) = a_1 n_1 + a_2 n_2 + a_0$ for rational numbers $a_0,a_1$, and $a_2\in \Q$. Function $f$ is a \emph{finite combination} of the finite set of functions $\{\vp_1,\dots,\vp_k\}$ if 
$\dom(f)=\bigcup_{i=1}^k \dom(\vp_i)$ and $f(\n)=\vp_i(\n)$ whenever $\n\in\dom(\vp_i)$. Throughout we write $\dom_i$ in place of $\dom(\vp_i)$.
We define semilinear functions using a characterization of Chen et al. \cite{chen2014deterministic}: 
\begin{definition}[Semilinear function \cite{chen2014deterministic}]
\label{def:semilinear-classification}
A function $f:\N^2\to\N$ is
\emph{semilinear} if and only if $f$ is a finite combination
of partial affine functions 
with linear domains.
\end{definition}

We next define \emph{semiaffine} functions, a refinement of Definition \ref{def:semilinear-classification}. Lemma \ref{lem:semi_affine_function} then states that semilinear and semiaffine functions are equivalent. The proof is in Appendix \ref{app:semi_affine_functions}.

\begin{definition}[Semiaffine function]
\label{def:semi_affine_function}
Let $\Grid_p + \offset$ be a periodic grid. A function $f:\G_p + \offset \to\N$ is \emph{semiaffine} if and only if $f$ is a finite combination of partial affine functions whose domains are points, lines or wedges on grid $\G_p + \offset$.
A function $f:\N^2\to\N$ is \emph{semiaffine with period} $p \in \N^+$ if and only if $f$ is a combination of semiaffine functions on grids of the form $\Grid_p+\offset$.
\end{definition}

\begin{lemma}
\label{lem:semi_affine_function}
A function $f:\N^2\to\N$ is semilinear if and only if $f$ is semiaffine. 
\end{lemma}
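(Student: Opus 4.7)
The plan is to prove the equivalence in both directions. The reverse direction (semiaffine $\Rightarrow$ semilinear) is essentially immediate from the definitions; the forward direction relies on Lemma~\ref{lem:semi_affine_characterization} to refine an arbitrary semilinear decomposition into a semiaffine one with a common period.

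For the reverse direction, suppose $f$ is semiaffine. Unfolding Definition~\ref{def:semi_affine_function}, $f$ is a finite combination of partial affine functions whose domains are points, lines on grids, and wedges on grids. Each such domain is a linear set: points correspond to $t=0$ in the definition of linear, one-way-infinite lines on grids to $t=1$, and wedges are linear by their definition as two-way-infinite sets of the form $\Grid \cap \T$. Hence $f$ is a combination of partial affine functions on linear domains, which by Definition~\ref{def:semilinear-classification} is exactly semilinear.

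For the forward direction, write $f$ as a combination of partial affine functions $\vp_1, \ldots, \vp_k$ with linear domains $D_1,\ldots,D_k$. Apply Lemma~\ref{lem:semi_affine_characterization} to each $D_i$ (a linear set is semilinear) to obtain a decomposition $D_i = \bigcup_j E_{i,j}$ into points, lines on grids, and wedges on grids, all sharing some common period $p_i$. Set $p := \operatorname{lcm}(p_1,\ldots,p_k)$ and refine each $E_{i,j}$ so that its supporting grid has period $p$. The key identity is $\Grid_{p_i} + \offset = \bigcup_{0 \le a, b < p/p_i} \bigl(\Grid_p + \offset + (ap_i, bp_i)\bigr)$: a point remains a point, a line on $\Grid_{p_i} + \offset$ splits into $p/p_i$ lines on grids of period $p$, and a wedge $(\Grid_{p_i} + \offset) \cap \T$ splits into pieces of the form $(\Grid_p + \offset') \cap \T$, each of which is a point, line, or wedge on a grid of period $p$ (invoking Lemma~\ref{lem:semi_affine_characterization} once more and enlarging $p$ if needed to handle degenerate intersections).

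After refinement, each $D_i$ is a disjoint union of points, lines, and wedges on grids of period $p$, and on each such piece $f$ agrees with the affine function $\vp_i$. Grouping the pieces by the shifted grid $\Grid_p + \offset$ they lie on, the restriction of $f$ to $\Grid_p + \offset$ is a finite combination of partial affine functions whose domains are points, lines, and wedges on that grid, hence semiaffine on $\Grid_p + \offset$. Since $f$ is the combination of these per-grid restrictions, $f$ is semiaffine with period $p$, as required. The main obstacle is the wedge-refinement step: checking that $(\Grid_p + \offset') \cap \T$ really does decompose into points, lines, and wedges on the finer grid — including the degenerate cases in which the intersection is lower-dimensional or has parallel bounding lines — so that the per-grid restriction of $f$ fits the tight semiaffine template of Definition~\ref{def:semi_affine_function}.
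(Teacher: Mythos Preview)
The paper omits the proof of this lemma, deferring it to the full version, so there is no in-paper argument to compare against. Your outline is the natural one and is exactly what Lemma~\ref{lem:semi_affine_characterization} is set up to deliver: the reverse direction is immediate since points, lines on grids, and wedges are linear sets by definition; the forward direction applies Lemma~\ref{lem:semi_affine_characterization} to each linear domain $D_i$ and then refines to a common period via the LCM.

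The obstacle you flag at the end is milder than you suggest. The threshold set $\T$ in a refined piece $(\Grid_p + \offset') \cap \T$ is inherited unchanged from the parent wedge, so its bounding lines remain non-parallel and the ``parallel bounding lines'' degeneracy cannot arise. The refined piece is also always two-way-infinite: since the parent $(\Grid_{p_i}+\offset)\cap\T$ is two-way-infinite, the two-dimensional recession cone of $\T$ must meet the open positive quadrant, so $\T$ contains axis-aligned boxes of arbitrarily large side length and hence points of every subgrid $\Grid_p + \offset'$ with both coordinates arbitrarily large. Thus each refined piece is itself a wedge on $\Grid_p + \offset'$, and no second invocation of Lemma~\ref{lem:semi_affine_characterization} or further enlargement of $p$ is needed. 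With this observation your argument goes through cleanly.
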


Our main result, Theorem \ref{thm:output_oblivous_iff}, shows that output-oblivious functions are exactly the following two special types of semiaffine functions.
In the first special case, on each grid $\Grid_p + \offset$, $f$ is
restricted to be an affine (rather than a more general semiaffine)
function.

\begin{definition}[Grid-affine function]
\label{def:grid-affine}
A function $f:\N^2 \rightarrow \N$ is \emph{grid-affine} if and only if for some $p\in \N^+$, $f$ is a combination of affine functions on points and on grids of period $p$. 
\end{definition}

A function $f:D \rightarrow \N$ is \emph{increasing} if $f(\n)\leq f(\n')$ for all $\n\leq \n'$, where $\n, \n' \in D$. Doty and Hajiaghayi~\cite{doty-hajiaghayi-15} observed that an output-oblivious function must be increasing. We prove this formally in Appendix~\ref{app:non_increasing_functions}. Accordingly, we hereafter focus on increasing functions. 

\begin{definition}[Fissure function]
\label{def:fissure_function}
Let $\G$ be a two-way-infinite grid. An increasing semiaffine function $f:\Grid \rightarrow \N$ is a {\em partial fissure function} if for some $\offset\in \N^2$, $f$ can be represented as follows for all $\n \ge \offset$:
\begin{equation}
\label{eqn:fissure}
f(\n)=\begin{cases}
\vp_A(\n),&\text{if } \vp_A(\n) - \vp_B(\n) \le -k, \\
\vp_{-i}(\n) = \vp_A(\n) - d_{-i},&\text{if } \vp_A(\n) - \vp_B(\n) = -i, 1\le i < k, \\
\vp_i(\n) = \vp_B(\n) - d_{i},&\text{if }  \vp_A(\n) - \vp_B(\n) = i, 0\le i < k, \\
\vp_B(\n),&\text{if } \vp_A(\n) - \vp_B(\n) \ge k. \\
\end{cases}
\end{equation}
where $\vp_A(\n) = A_0 + A_1n_1 + A_2 n_2$, $\vp_B(\n) = B_0 + B_1n_1 + B_2 n_2$, 
for integers $A_0$ and $B_0$, nonnegative rationals $A_1,A_2, B_1$ and $B_2$, and nonnegative integers $d_{-k}, \ldots, d_{-1}, d_0, d_1, \ldots,d_k$.
For $-k \le i \le k$,
we refer to the line
$\vp_A(\n) - \vp_B(\n) = i$
as a {\em fissure line} and call it $L_i$.
Moreover, 
$\vp_A < \vp_B$ on $\dom_A$ and $\vp_B < \vp_A$ on $\dom_B$; thus
$A_1> B_1$ and $B_2 > A_2$. We say  
$f:\N^2\to\N$ is a \emph{(complete) fissure function} if 
$f$ is a  combination of partial fissure functions on grids of period $p$.
\end{definition}

\section{Proof of Sufficiency in Theorem \ref{thm:output_oblivous_iff}}
\label{sec:sufficiency_proof}

This section shows that if an increasing semilinear function $f:\N^2 \rightarrow \N$ is either a grid-affine function or a fissure function, then $f$ is output-oblivious. We do this in three lemmas. Lemma \ref{lem:partialgridoo} shows that an increasing affine function whose domain is a grid is output-oblivious.
Lemma \ref{lem:fisureoo} shows that a partial fissure function is output-oblivious.
Finally, Lemma \ref{lem:stiching} shows that if $f$ is increasing and is a combination of partial output-oblivious functions defined on grids, we can stitch together the CRNs for the partial functions to obtain an output-oblivious CRN for $f$.

\begin{lemma}
\label{lem:partialgridoo}
Let $\Grid$ be a grid. Any increasing affine function $f: \Grid \rightarrow \N$ is output-oblivious.
\end{lemma}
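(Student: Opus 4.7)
The plan is to give a direct construction with a single leader. Write $\Grid = \{(\p\alpha_1 + o_1, q\alpha_2 + o_2) : \alpha_1,\alpha_2 \in \N\}$ for some $\p, q \in \N$ and $\offset = (o_1, o_2) \in \N^2$, and write $f(n_1, n_2) = a_1 n_1 + a_2 n_2 + a_0$ with $a_0, a_1, a_2 \in \Q$. The three constants $v_0 := f(o_1, o_2)$, $v_1 := a_1\p$, and $v_2 := a_2 q$ all lie in $\N$: $v_0$ because $\offset \in \Grid$ and $f$ maps into $\N$; $v_1$ because $v_1 = f(o_1 + \p, o_2) - v_0$ is a difference of values in $\N$, which is non-negative by monotonicity of $f$ on $\Grid$ (set $v_1 := 0$ when $\p = 0$), and similarly $v_2 \in \N$. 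Hence on $\Grid$ we have $f(\p\alpha_1 + o_1, q\alpha_2 + o_2) = v_0 + \alpha_1 v_1 + \alpha_2 v_2$, which suggests a batch-processing CRN.

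The CRN has species $X_1, X_2, Y$ together with a leader having a finite set of states. The leader begins in state $L_0$ and performs an \emph{offset phase} via the fixed chain $L_0 \to L_1 \to \cdots \to L_{o_1 + o_2}$ in which the first $o_1$ transitions each consume one $X_1$ and the remaining $o_2$ transitions each consume one $X_2$, with the final transition producing all $v_0$ copies of $Y$. Setting $M := L_{o_1 + o_2}$, the leader then enters the \emph{main phase}, which consists of a mod-$\p$ chain on $X_1$,
\[
M + X_1 \to M^{(1)}_1, \qquad M^{(1)}_i + X_1 \to M^{(1)}_{i+1}\ (1 \le i \le \p - 2), \qquad M^{(1)}_{\p - 1} + X_1 \to M + v_1\,Y,
\]
together with an analogous mod-$q$ chain on $X_2$ that produces $v_2\,Y$ per completed batch. (If $\p = 1$ the $X_1$-chain collapses to $M + X_1 \to M + v_1\,Y$; if $\p = 0$ it is omitted entirely; similarly for $q$.)

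For stable computation, any valid initial configuration with input in $\Grid$ satisfies $n_1 \ge o_1$ and $n_2 \ge o_2$, so the offset phase can always complete; afterwards exactly $\p\alpha_1$ copies of $X_1$ and $q\alpha_2$ copies of $X_2$ remain. A short counting invariant (the number of $X_1$ consumed so far equals $o_1$ plus a multiple of $\p$ plus the current mod-counter index $i$) shows that whenever the leader sits in state $M^{(1)}_i$ with $i>0$ at least one $X_1$ is available, and analogously for $M^{(2)}_j$; hence from any reachable configuration fair scheduling returns the leader to $M$ and drains all remaining inputs, yielding a stable configuration with exactly $v_0 + \alpha_1 v_1 + \alpha_2 v_2 = f(n_1, n_2)$ copies of $Y$. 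Output-obliviousness is immediate since $Y$ appears only as a product. The construction is essentially mechanical; the only genuine care needed is in handling the degenerate grids (a single point or a line) and in verifying the counting invariant just noted, which I expect to be the most delicate point of the argument.
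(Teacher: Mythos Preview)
Your proposal is correct and takes essentially the same approach as the paper: both identify the nonnegative integers $v_0=f(\offset)$, $v_1=a_1p$, $v_2=a_2q$ and build a CRN that first handles the offset and then processes inputs in batches of size $p$ (resp.\ $q$). The only difference is cosmetic---the paper uses single high-arity reactions (e.g., $L+o_1X_1+o_2X_2\to L'+v_0Y$ and $L'+pX_1\to L'+v_1Y$), whereas you serialize each batch through a chain of leader states consuming one input molecule at a time; your counting invariant is exactly what verifies that this serialization cannot get stuck mid-chain on valid inputs.
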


\begin{proof}
We consider the case that $\Grid=\{(p,0)\alpha_1+(0,q)\alpha_2:\alpha_i\in\N\}+\offset$ is two-way-infinite;
the cases when $\Grid$ is a point or a line are simpler.
Let $f(\n) = a_1 n_1 + a_2 n_2 + a_0$, where $a_1,a_2 \in\Q^+$ and $a_0 \in \Q$.
Since $\Grid$ is two-way-infinite and $f$ is increasing, $a_1$ and $a_2$ are nonnegative.
On input $\n = (n_1, n_2) \in \Grid$, i.e., given $n_1$ copies of $X_1$ and $n_2$ copies of $X_2$, the following CRN will produce $f(\n)$ copies of $Y$:
\[
\begin{array}{lll}
L + o_1 X_1 + o_2 X_2 & \rightarrow L' + (a_1o_1 + a_2o_2 +a_0)Y & \mbox{// base case} \\
L' + p X_1 & \rightarrow L' + a_1 p Y & \\
L' + q X_2 & \rightarrow L' + a_2 q Y &
\end{array}
\]
Note that the first reaction must produce a	non-negative and integral number of $Y$'s since $f(\offset)\in\N$. Likewise, $a_1p\in\N$ since $a_1p=f(\offset+(p,0))-f(\offset)$, and similarly for $a_2q$. 
Finally, the CRN is clearly output-oblivious since the output species $Y$ is never a reactant.
\end{proof}

We show in Lemma \ref{lem:fisureoo} below that any partial fissure
function is output-oblivious. First we describe some useful structure pertaining to partial fissure functions $f: \Grid \rightarrow \N$. 
We can represent such a fissure function as $f(\n) =
\min\{\vp_A(\n),\vp_B(\n)\} - d_i$, where $d_i$ is determined by the
fissure line $L_i$ on which $\n$ resides, and $d_i=0$ if $i$ is not on
a fissure line; this formulation is not identical to but is equivalent to that of Definition \ref{def:fissure_function}.
As noted in that definition, it must be that $A_1>B_1$ and $B_2>A_2$, since $\vp_A < \vp_B$ on $\dom_A$ and vice versa.

For all integers $i$, let $L_i$ be the line $\vp_A(\n) -
\vp_B(\n) = i$.  All of these lines, which include the $2k-1$
``fissure lines'' $L_i, -k < i < k$, have the same slope. In addition to the fissure lines, our CRN construction will also refer to the lines $L_{i}$ for $i$ in the range $[k,\ldots, K - 1]$, where
$K=k+d_{\max}-1$. We call these the {\em lower boundary lines}, and we call the lines $L_i$ for $i$ in the range $[-K+1,\ldots,-k]$ the {\em upper boundary lines}. Note that $(0,0)$ is on the line $L_{A_0-B_0}$ and more generally, if point ${\bf
  p}$ is on line $L_i$ then $(A_1-B_1)p_1 - (B_2-A_2)p_2 = i-A_0+B_0$.
  For $\n \in \Grid$ let
$M(\n) = (\vp_A(\n),\vp_B(\n))$.
The next lemma shows that
$M(\n) \in \N^2$ for all sufficiently large $\n \in \Grid$, even though $\dom_A$ and $\dom_B$ are proper subsets of $\Grid$.
\begin{lemma}
\label{lem:well_defined_outside_wedge}
Let $\vp:D\rightarrow \N$ be a partial affine function, where $D$ is a wedge domain on $\G$. Let $\vb{m}$ be a minimal point of $D$. Then $\vp(\n) \in\N$ 
on all $\n\in \G$ with $\n\geq \vb{m}$.
\end{lemma}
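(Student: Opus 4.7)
The plan is to verify that the affine formula for $\varphi$, applied to any grid point $\n \in \G$ with $\n \geq \mathbf{m}$, yields a non-negative integer. Writing $\vp(\n) = a_1 n_1 + a_2 n_2 + a_0$ with $a_0,a_1,a_2 \in \Q$, I would split the argument into two claims: first, that the extension is integer-valued on all of $\G$; second, that it is non-negative on $\G \cap \{\n : \n \geq \mathbf{m}\}$.

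For integrality, I would exploit the genuinely two-dimensional nature of the wedge. Since $D$ is a wedge on $\G$, its two bounding lines are non-parallel and $D$ is two-way-infinite, so it contains an interior point $\mathbf{c}$ that lies far enough from both bounding lines that the translates $\mathbf{c} + (p,0)$ and $\mathbf{c} + (0,p)$ also lie in $D$. Since $\vp$ takes values in $\N$ on all three of these grid points, the differences give
\[
a_1 p = \vp(\mathbf{c}+(p,0)) - \vp(\mathbf{c}) \in \Z, \qquad a_2 p = \vp(\mathbf{c}+(0,p)) - \vp(\mathbf{c}) \in \Z.
\]
Every point of $\G$ has the form $\mathbf{m} + k_1 (p,0) + k_2 (0,p)$ for integers $k_1, k_2$, so $\vp(\n) = \vp(\mathbf{m}) + k_1 a_1 p + k_2 a_2 p \in \Z$ across all of $\G$.

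For non-negativity, I would invoke the context in which this lemma is applied, namely to $\vp_A, \vp_B$ arising from a partial fissure function, where the leading coefficients are non-negative rationals (equivalently, $\vp$ is part of an increasing semiaffine function). Under this sign hypothesis, any grid $\n \geq \mathbf{m}$ has both $k_1, k_2 \geq 0$ in the decomposition above, hence $\vp(\n) \geq \vp(\mathbf{m}) \geq 0$ since $\mathbf{m} \in D$.

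The main obstacle is the integrality step: one must show that $\vp$ being integer-valued on the wedge $D$ alone—which may omit entire grid lines emanating from near the corner $\mathbf{m}$—already forces integrality on the full grid, and this requires locating the 2D ``interior triangle'' $\{\mathbf{c}, \mathbf{c}+(p,0), \mathbf{c}+(0,p)\}$ inside $D$, which is where the wedge's two-dimensional extent is used critically. The non-negativity step is then routine given the non-negative coefficients inherited from the increasing function in whose decomposition $\vp$ sits.
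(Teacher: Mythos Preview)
The paper does not include a proof of this lemma in the present version (it is one of the results deferred to the full version), so there is nothing to compare your argument against directly. On its merits your proposal is correct and is the natural argument.

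The integrality step is exactly right: because the two bounding lines of a wedge are non-parallel, the wedge's width on $\G$ is unbounded, so some $\mathbf{c}\in D$ has $\mathbf{c}+(p,0),\mathbf{c}+(0,p)\in D$ as well, giving $a_1p,a_2p\in\Z$ and hence $\vp(\n)=\vp(\mathbf{m})+k_1a_1p+k_2a_2p\in\Z$ for every $\n\in\G$.

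You are also right to point out that the non-negativity half does \emph{not} follow from the hypotheses as literally stated. For instance, on $\G=\N^2$ the wedge $D=\{\n:n_2\le n_1\}$ with $\vp(n_1,n_2)=n_1-n_2$ and $\mathbf{m}=(0,0)$ satisfies all stated hypotheses yet $\vp(0,1)=-1$. The lemma is only invoked in the paper for $\vp_A,\vp_B$ arising from a partial fissure function, where by Definition~\ref{def:fissure_function} the linear coefficients $A_1,A_2,B_1,B_2$ are non-negative rationals; under that implicit hypothesis your monotonicity argument $\vp(\n)\ge\vp(\mathbf{m})\ge 0$ for $\n\ge\mathbf{m}$ is immediate. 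So your decision to import that context is not a shortcut but a necessary repair of the statement.
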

We let
${\mathcal P}$ be the set of rational points ${\bf p}$ for which $M({\bf p}) \in \N$ and let ${\mathcal Q}$ be the range of $M$ with respect to domain
${\mathcal P}$. For ${\bf q} \in {\mathcal Q}$, let $M^{-1}({\bf q})$
denote the inverse of $M$ ($M^{-1}{\bf q}$ is unique since $(A_1,A_2)$ and $(B_1,B_2)$ are linearly independent). 
The following claim follows easily from the definition of $M$ and will be useful later. \\
\begin{claim}
\label{claim:M-increasing}
Let $z_1, z_1', z_2, z_2' \in \N$.
If $z_1 \le z_1'$ and $M^{-1}(z_1,z_2)$ is in $\dom_B$
then $M^{-1}(z_1',z_2)$ is also in $\dom_B$.
Similarly if $z_2 \le z_2'$ and $M^{-1}(z_1,z_2)$ is in $\dom_A$
then $M^{-1}(z_1,z_2')$ is also in $\dom_A$. \\
\end{claim}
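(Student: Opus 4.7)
The plan is to reduce the claim to a one-line monotonicity check by translating the conditions ``$M^{-1}(z_1,z_2) \in \dom_A$'' and ``$M^{-1}(z_1,z_2) \in \dom_B$'' into simple inequalities on the image coordinates. By Definition \ref{def:fissure_function}, a point $\n$ lies in $\dom_B$ iff $\vp_A(\n) - \vp_B(\n) \ge k$ and in $\dom_A$ iff $\vp_A(\n) - \vp_B(\n) \le -k$ (these linear conditions extend verbatim from $\Grid$ to $\Q^2$, which is all we need). Since $(A_1,A_2)$ and $(B_1,B_2)$ are linearly independent over $\Q$, $M$ is a $\Q$-linear bijection, so for any $(z_1,z_2) \in \mathcal{Q}$ the preimage $\vb{p} := M^{-1}(z_1,z_2)$ is the unique rational point satisfying $\vp_A(\vb{p}) = z_1$ and $\vp_B(\vb{p}) = z_2$. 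Consequently, the condition $M^{-1}(z_1,z_2) \in \dom_B$ is equivalent to $z_1 - z_2 \ge k$, and $M^{-1}(z_1,z_2) \in \dom_A$ is equivalent to $z_2 - z_1 \ge k$.

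With this reformulation in hand, both parts fall out immediately. For the first, suppose $z_1 \le z_1'$ and $M^{-1}(z_1,z_2) \in \dom_B$; then $z_1 - z_2 \ge k$, so $z_1' - z_2 \ge z_1 - z_2 \ge k$, whence $M^{-1}(z_1',z_2) \in \dom_B$. The second part is symmetric: from $z_2 \le z_2'$ and $z_2 - z_1 \ge k$ we get $z_2' - z_1 \ge k$, hence $M^{-1}(z_1,z_2') \in \dom_A$. There is no real obstacle here; the only minor point to keep in mind is that $\dom_A$ and $\dom_B$ are being used to classify arbitrary rational preimages rather than only grid points, but this extension is natural because their defining conditions are purely linear inequalities in $\n$.
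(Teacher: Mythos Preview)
Your argument is correct and is exactly the straightforward verification the paper has in mind: the paper does not give a proof of this claim but simply notes that it ``follows easily from the definition of $M$.'' Your reformulation---that $M^{-1}(z_1,z_2)\in\dom_B$ iff $z_1-z_2\ge k$ and $M^{-1}(z_1,z_2)\in\dom_A$ iff $z_1-z_2\le -k$---is precisely the intended easy observation. One very minor terminological nit: $M$ is affine rather than $\Q$-linear (since $M(0,0)=(A_0,B_0)$ need not be the origin), but this has no effect on your argument.
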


\begin{lemma}
\label{lem:fisureoo}
	Any partial fissure function $f: \Grid \rightarrow \mathbb{N} $ is output-oblivious.
\end{lemma}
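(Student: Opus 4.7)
Plan for proof of Lemma~\ref{lem:fisureoo}.

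The plan is to exhibit an output-oblivious CRN $\C$ that stably computes the given partial fissure function $f : \Grid \to \N$. Guided by the representation $f(\n) = \min(\vp_A(\n), \vp_B(\n)) - d_i$ (with $d_i = 0$ off the fissure lines), my approach is to introduce two intermediate species $P$ and $Q$ whose final counts will equal $\vp_A(\n)$ and $\vp_B(\n)$, and to use the single $Y$-producing reaction $P + Q \to Y$, which fires until one of the two species is exhausted and hence produces exactly the minimum of their counts as copies of $Y$. Because $\min(a-d, b-d) = \min(a,b) - d$ for any $d \ge 0$, the correction $d_i$ on fissure line $L_i$ can be implemented by having the leader absorb $d_i$ copies each of $P$ and $Q$ (without producing any $Y$) before they have a chance to combine.

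I would construct $\C$ in the following steps. First, reduce to the periodic-grid case $\Grid = \Grid_p + \offset$ and set $\alpha_j := A_j p$, $\beta_j := B_j p$, which are nonnegative integers because $\vp_A, \vp_B : \Grid \to \N$. Second, design the leader's state space to encode both the residues $(n_1, n_2) \bmod p$ and a position indicator $s$: on the fissure band, $s$ records the exact value of $\vp_A - \vp_B \in \{-K+1, \ldots, K-1\}$; for points further into $\dom_A$ or $\dom_B$, the indicator collapses into wedge states $[A]$ and $[B]$ respectively. Third, add production reactions of the form $L + p X_1 \to L' + \alpha_1 P + \beta_1 Q$, analogously for $X_2$, and a base reaction consuming $o_1$ copies of $X_1$ and $o_2$ copies of $X_2$ at initialization; these produce $P$ and $Q$ at long-run rates $A_1,B_1$ per $X_1$ and $A_2,B_2$ per $X_2$. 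Fourth, on each band state at fissure line $L_i$ with $|i| < k$, add absorption reactions $L + P \to L^{\bullet}$ and $L^{\bullet} + Q \to L^{\bullet\bullet}$ that remove one $P$ and one $Q$ without producing $Y$, firing up to $d_i$ times with the count book-kept by auxiliary leader states, together with reverse reactions that regenerate absorbed $P$s and $Q$s when the leader crosses from one fissure line to another. Finally, include the single reaction $P + Q \to Y$.

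The main obstacle is that the leader has finitely many states but $\vp_A(\n) - \vp_B(\n)$ ranges over an unbounded set: once the leader enters the wedge state $[B]$, further interleaved $X_2$-processing could in principle drag the true value of $\vp_A - \vp_B$ back into the fissure band, invalidating the wedge commitment. To address this I would use Claim~\ref{claim:M-increasing}, which implies that in $\dom_B$ the function depends only on $\vp_B$: the wedge-$[B]$ reactions can safely continue producing $P$ and $Q$ at the standard rates with no absorption, and if $\n$ genuinely lies in $\dom_B$ the stable configuration yields $\min(\vp_A(\n), \vp_B(\n)) = \vp_B(\n) = f(\n)$. The stable-reachability condition of CRN computation would then be secured by including auxiliary reverse reactions from $[B]$ back into appropriate band states when extra $X_2$'s remain, ensuring every reachable configuration can proceed to a correct stable configuration; a dual construction handles $[A]$. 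Correctness across all cases would then follow by a case analysis on the final position of $\n$: wedge positions produce $\min(\vp_A, \vp_B) = f(\n)$ with no active absorption, while fissure-line positions produce $\min(\vp_A - d_i, \vp_B - d_i) = f(\n)$. Output-obliviousness of $\C$ is immediate from the construction, since $Y$ never appears as a reactant in any reaction.
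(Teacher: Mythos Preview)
Your proposal has a genuine gap: the reaction $P + Q \to Y$ is not gated by the leader, so nothing prevents it from firing before the leader has had a chance to absorb the $d_i$ copies of $P$ and $Q$ required on fissure line $L_i$. Concretely, take the fissure function of Figure~\ref{fig:simplefissure} with $\vp_A(\n)=n_1+1$, $\vp_B(\n)=n_2+1$, and $d_0=1$ on the line $n_1=n_2$. On input $(1,1)$ your production reactions yield two copies each of $P$ and $Q$; an adversarial schedule then fires $P+Q\to Y$ twice before any absorption occurs, producing two copies of $Y$. Since $Y$ is never a reactant in an output-oblivious CRN, this overshoot is irreversible, yet $f(1,1)=1$. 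Stable computation requires that \emph{every} reachable configuration can still reach a correct stable configuration, so this execution witnesses incorrectness. Routing $P+Q\to Y$ through the leader does not by itself help: if the leader first enters a wedge state (say after processing many $X_1$'s), it will freely produce $Y$'s there, and subsequent $X_2$'s can then carry the true input back onto a fissure line with positive $d_i$, again leaving you with too many $Y$'s.

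The paper's construction avoids this by never committing output eagerly. It maintains an explicit \emph{deficit} $d = f(M^{-1}(\z)) - y$ in the leader's state and releases a copy of $Y$ only when the deficit exceeds $d_{\max}=\max_i d_i$; thus at every moment the CRN has withheld at least $d_{\max}$ potential outputs, which is exactly the slack needed to accommodate any fissure-line correction that may still materialize. Your absorption/regeneration idea is morally the mirror image of this, but to make it sound you would need the same $d_{\max}$ hold-back: absorb $d_{\max}$ copies of $P$ and $Q$ unconditionally up front (not $d_i$ conditionally), and only regenerate the surplus $d_{\max}-d_i$ once the leader can certify the input's final line via a mechanism like the paper's $(\lx,\lz)$ matching. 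Without that safeguard the construction is not output-oblivious.
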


\begin{proof}
For simplicity we assume that the grid $\Grid$ is $\N^2$, i.e., the
period of the grid is 1 and the offset $\offset$ is zero; it is straightforward to generalize to
larger grid periods. With these assumptions, it must be that $A_0$ and $B_0$ are nonnegative integers, which slightly simplifies base cases of our construction.

The CRN input is represented as the initial counts of species $X_1$
and $X_2$, and $\bx = (x_1,x_2)$ denotes the counts of $X_1$ and $X_2$ that have been consumed at any time.
Rather than producing output $f(\x)$ directly upon consumption of
$\x$, our CRN produces $\vp_A(\x)$ copies of a species $Z_1$ and
$\vp_B(\x)$ copies of a species $Z_2$, effectively computing the mapping $M$ described above. Note that $\vp_A(\x)$ and $\vp_B(\x)$ are nonnegative integers by Lemma \ref{lem:well_defined_outside_wedge}.
The CRN works backwards from the quantities $\vp_A(\x)$
and $\vp_B(\x)$ to reconstruct
$f(\x)$.  Roughly this is possible because $f(\x)$ is ``almost'' the
min of $\vp_A(\x)$ and $\vp_B(\x)$, and min is easy to compute.  More
precisely, we can assume that
$f(\x) = \min\{\vp_A(\x),\vp_B(\x)\} - d_i$, where $d_i$ is
determined by the fissure line $L_i$ on which $\n$ resides, and
$d_i=0$ if $i$ is not on a fissure line.  
In addition to the input, a leader $L$ is also present initially. Other
CRN molecules (not initially present) represent a state $[\lx,\lz,d]$ containing three components; we explain the components later.  
Our CRN has three types of reactions: {\em $Z$-producing},
{\em $Z$-consuming}, and {\em $Y$-producing} reactions. The first $Z$-producing reaction handles the base case, producing $(\vp_A(0,0), \vp_B(0,0))= (A_0,B_0)$:
\[
L \rightarrow L' + A_0 Z_1 + B_0 Z_2.
\]
The remaining two $Z$-producing reactions consume $X_1$ and $X_2$ while producing $Z_1$ and $Z_2$. 
If $L_i$ is the line containing $\bx$, the first state component, 
$\lx$, keeps track of $i \mod 2K-1$, where
$K = k + d_{\max}$.
If $i$ is in the range
$[-K+1,K-1]$ then $\lx$ uniquely determines $i$.
For convenience in what follows, we consider $\lx$ to be in the range
$[-K+1,K-1]$ rather than $[0,2K-1]$. The reactions are as follows, where $*$ represents any state component value that is unchanged as a result of the reaction:
\[
\begin{array}{ll}
{[\lx,*,*]} + X_1 & \rightarrow [\lx+(A_1-B_1) \mod 2K-1,*,*] + A_1 Z_1 + B_1 Z_2\\
{[\lx,*,*]} + X_2 & \rightarrow [\lx+ (A_2-B_2) \mod 2K-1,*,*] + A_2 Z_1 + B_2 Z_2
\end{array}
\]

We next describe the $Z$-consuming reactions.  These reactions update
the remaining two components of the state to keep track of which fissure or boundary line contains
$M^{-1}({\bf z})$, where ${\bf z} = (z_1,z_2)$ denotes the counts of
$(Z_1,Z_2)$ that have been consumed at any time.
The reactions also track what is the
{\em deficit}, i.e., the difference between the ``true'' output
$f(M^{-1}({\bf z}))$ and the current output $y$, i.e., number of copies of species $Y$ that has been actually produced so far.
Formally, all reactions maintain the following {\em state invariant}:
if after any reaction the state is $[\lx,\lz,d]$ then
\begin{enumerate}
\item
$\lz$ is the index of the
boundary or fissure line $L_{\lz}$ that contains $M^{-1}({\bf z})$, 
and $\lz$ is in the range $-K+1 \le \lz \le K-1$; and
\item
$d = f(M^{-1}({\bf z})) -y$ is the deficit in the number of $y$'s produced, and is in the  finite range
$-d_{\max} \le d \le   2d_{\max}+1$, where
$d_{\max} = \max \{ d_i \;|\; -k < i < k \}$. \\
\end{enumerate}
$Z$-consuming reactions of the first type handle the base case when $\n = (0,0)$:
\[
L' + A_0Z_1 + B_0Z_2 \rightarrow [A_0-B_0,A_0-B_0, \min\{A_0,B_0\} - d_{A_0-B_0}]. \\
\]
$Z$-consuming reactions of the second type consume a copy of $Z_1$ 
and reactions of the third type consume a copy of $Z_2$.  
Upon consumption, the state
components are updated to ensure that the state invariant holds.
\[
\begin{array}{lll}
{[\lx,\lz,d]} + Z_1 & \rightarrow [\lx,\lz+1,d+d_{\lz}^+], & -K < \lz < k, d \le d_{\max} \mbox{ and}\\
{[\lx,\lz,d]} + Z_2 & \rightarrow [\lx,\lz-1,d+ d_{\lz}^-], & -k < \lz < K, d \le d_{\max}, \\
\end{array}
\]
where
\[
\begin{array}{ll}
d_{\lz}^+ = \left\{
\begin{array}{ll}
d_{\lz} - d_{\lz+1}, &\lz \ge 0  \\
 d_{\lz} - d_{\lz+1} + 1, &\lz < 0.
\end{array} 
\right.
& \mbox{ and }
d_{\lz}^- = \left\{
\begin{array}{ll}
 d_{\lz} - d_{\lz-1} + 1, &\lz \ge 0  \\
d_{\lz} - d_{\lz-1}, &\lz < 0.
\end{array}
\right.
\end{array}
\]
The deficit $d$ can never exceed $2d_{\max}+1$ since the reactions are only applicable when $d\le d_{\max}$ and $d$ can increase by at most $d_{\max}+1$.

The $Y$-producing reactions produce output molecules of species $Y$, while maintaining the state invariant above,
and ensuring that at the
end of the computation the number of $Y$s produced equals $f(\n)$.
The first $Y$-producing reaction produces $d-d_{\max}$ copies of $Y$
when $d$ becomes greater than $d_{\max}$. 
\[
\begin{array}{l}
[*,*,d] \rightarrow [*,*,d_{\max}] + (d - d_{\max})Y, \mbox{ if } d > d_{\max}.
\end{array}
\]                                                
Before describing the remaining $Y$-producing reactions, we
describe some properties of the system of reactions above.  We say that $Z$-consumption {\em stalls} if none of the $Z$-consuming reactions are ever applicable again. Let ${\bf z}_s = (z_{1s},z_{2s})$ be the counts of $(Z_1,Z_2)$ consumed when $Z$-consumption stalls
(${\bf  z}_s$ is independent of the order in which the reactions happen). The $Y$-producing reaction above ensures that
the $Z$-consuming reactions are never stalled
because $d$ becomes too large.  Also, the $Z$-consuming reactions don't stall if $\lz$ is a fissure line and another $Z_1$ is or will eventually be available (and similarly if another $Z_2$ is or will eventually be available), because
$\l_z$ changes by 1 upon consumption of $Z_1$ and so is still less than $K$.

Stalling happens when and only when one of the following (exclusive) cases arise. (i) All copies of both $Z_1$ and $Z_2$ have been consumed and no more will ever be produced, so
${\bf z}_s = (\vp_A(\n),\vp_B(\n))$.
(ii) All copies of $Z_2$ have been consumed and no more will ever be produced, so $z_{2s}=\vp_B(\n)$ but $z_{1s} < \vp_A(\n)$. In this case, $M^{-1}({\bf z}_s)$ is on a lower boundary line.
To see why, note that if $M^{-1}({\bf z}_s)$
were on a fissure or upper boundary line, then the
$Z$-consuming reaction that consumes $Z_1$ would eventually be applicable, because $\lz$ is in the proper range and at least one copy of $Z_1$ has yet to be consumed.
(iii) All copies of $Z_1$ have been consumed and no more will ever be produced, so $z_{1s} = \vp_A(\n)$, but $z_{2s} < \vp_B(\n)$.
In this case, the line $L_{\lz}$ containing $M^{-1}({\bf z}_s)$ must be an upper boundary line.

\begin{claim}
\label{claim:equal}
$f(M^{-1}({\bf z}_s)) = f(\n)$.
\end{claim}

\begin{proof}
This is trivial in case (i) when all $Z$s have been consumed and no more will be produced, since $M^{-1}({\bf z}_s)=\n$.
Consider case (ii) (case (iii) is similar). Then 
$M^{-1}({\bf z}_s) = M^{-1}(z_{1s},\vp_B(n))$, $z_{1s} < \vp_A(\n)$, and the line containing $M^{-1}({\bf z}_s)$ is a lower boundary line.
By Claim \ref{claim:M-increasing},
$\n$ must be in $\dom_B$, because
$\n = M^{-1}(\vp_A(\n),\vp_B(\n))$ and $\vp_A(\n) > z_{1s}$.
Therefore,
$f(M^{-1}({\bf z}_s)) = \vp_B(M^{-1}({\bf z}_s)) = \vp_B(\n) = f(\n)$.\qedhere
\end{proof}

We now return to the last three reactions of the CRN, which are $Y$-producing reactions; we will number them so that we can reference them later and refer to them as {\em deficit-clearing} reactions. The next reaction clears a positive deficit when 
both $\bx$ and $M^{-1}(\z)$ lie on the same fissure line:
\begin{equation}
\label{eqn:y2}
\begin{array}{l}
[\lx,\lz,d] \rightarrow [\lx,\lz,0] + dY 
\mbox{ if } -k < \lx = \lz < k.
\end{array}
\end{equation}
The last two $Y$-producing reactions clear the deficit 
if $M^{-1}({\bf z})$ is on a lower boundary line and for some nonnegative integer $r$, $M^{-1}(\z+(r,0))$ is
on a line $L_l$ with $l = \lx \mod 2K-1$. If such an $r$ exists, let $r_1$ be the smallest such integer
and add the following reaction:
\begin{align}
{[\lx,\lz,d]} + r_1 Z_1 & \rightarrow [\lx,\lz,0]  + r_1 Z_1 + dY, &\lz\ge k, \lz+r_1 = \lx \mod 2K-1, d >0. \label{eqn:y3}
\end{align}
We add a similar reaction when the line $L_{\lz}$ containing
$M^{-1}(\z)$ is an upper boundary line, when an similarly-defined $r_2$
exists:
\begin{align}
{[\lx,\lz,d]} + r_2 Z_2 & \rightarrow [\lx,\lz,0]  + r_2 Z_2 + dY, &\lz\le -k,\lz+r_2 = \lx \mod 2K-1, d >0. \label{eqn:y4}
\end{align}
This completes the description of the CRN. We need one more claim in order to complete the proof of the lemma:

\begin{claim}
\label{claim:nonnegative-deficit}
When $Z$-consumption stalls, the deficit is nonnegative.
\end{claim}

The proof is found in the appendix. To complete the proof, we argue that once $Z$-consumption stalls, some  deficit-clearing reaction will eventually be applicable, ensuring that the output eventually produced is $f(\n)$. If $M^{-1}({\bf z}_s)$ is on a fissure line then $M^{-1}({\bf z}_s)$ must equal $\n$, in which case  $Y$-producing reaction (\ref{eqn:y2}) is applicable.  If $M^{-1}({\bf z}_s)$ is on a boundary line then 
either (\ref{eqn:y3}) or (\ref{eqn:y4}) will be applicable once all inputs are consumed, since for some $r$, either $M^{-1}({\bf z}_s + (r,0)) = \n$ or
$M^{-1}({\bf z}_s + (0,r)) = \n$. Thus in all cases some $Y$-producing reaction eventually clears the deficit, ensuring that the output produced is $f(\n)$.
\end{proof}

\begin{lemma} (Stitching Lemma)
\label{lem:stiching}
Let $f$ be an increasing function. If $f:\N^2 \rightarrow \N$ is a finite combination of output-oblivious functions whose domains are grids, then $f$ is output-oblivious. Also if $f$ is the min of a finite number of output-oblivious functions then $f$ is output-oblivious.
\end{lemma}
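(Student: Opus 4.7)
My plan is to prove the two claims of the Stitching Lemma via separate constructions, both preserving output-obliviousness.

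For the min claim, I would duplicate each input species via reactions $X_j \to X_{j,1} + X_{j,2} + \cdots + X_{j,k}$ so that each $\C_i$ stably computing $f_i$ receives its own private input pool; rename $\C_i$'s output species to a distinct intermediate species $Y_i$ (with internal species disjointly renamed and a fresh leader spawned from the combined CRN's initial leader); and add a single combining reaction $Y_1 + Y_2 + \cdots + Y_k \to Y$. Because each $\C_i$ is output-oblivious, its reactions never inspect the count of $Y_i$, so consuming $Y_i$ by the combiner does not disturb $\C_i$'s progress---$\C_i$ still produces $f_i(\n)$ copies of $Y_i$ cumulatively. At any stable configuration the combiner is inapplicable, so $\min_i Y_i = 0$, forcing the combiner to have fired exactly $\min_i f_i(\n) = f(\n)$ times; output-obliviousness is immediate since $Y$ is never a reactant.

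For the combination-on-grids claim, I would first apply Lemma \ref{lem:semi_affine_characterization} to refine all grids $\G_i$ so they share a common period $p$. The combined CRN has a leader state $[L, r_1, r_2]$ tracking $(x_1 \bmod p, x_2 \bmod p)$, where $(x_1, x_2)$ is the input consumed so far; each consumption reaction updates $(r_1, r_2)$ and emits a scaled intermediate $Y'$, and a final reaction $p Y' \to Y$ yields the output. In the grid-affine subcase, $f$-increasing forces the affine slopes $a_1, a_2$ of all $\vp_i$ to agree (by comparing $f$-values on chains of closely-spaced points from different grids and taking a large-$\n$ limit), so $f(\n) = a_1 n_1 + a_2 n_2 + g(\n \bmod p)$ for a function $g$ pinned down at the grid offsets. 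I would then extend $g$ to all of $[0, p-1]^2$ with values in $\tfrac{1}{p}\Z$ satisfying the step inequalities $a_j + g(\mathbf{r}') - g(\mathbf{r}) \ge 0$ for every unit step $\mathbf{r} \to \mathbf{r}'$, so that reaction coefficients are nonnegative integers; such an extension exists because the $f$-increasing constraints translate to a no-negative-cycle potential problem on the offset torus, and the nonnegativity $\vp_i(\offset_i) \ge 0$ lets me choose $g(0, 0) \ge 0$. A telescoping argument then shows the cumulative $Y'$-count equals $p f(\n)$ exactly, giving $Y = f(\n)$ after conversion. The fissure subcase (each $\vp_i$ a partial fissure function on $\G_i$) is handled analogously: the offset-tracking leader is layered over internal state components tracking fissure-line positions, mirroring the bookkeeping of Lemma \ref{lem:fisureoo}, with reaction coefficients again chosen to maintain the cumulative-output invariant.

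The main obstacle will be the coefficient check in the grid-affine subcase: verifying that $g$ can be simultaneously extended to $\tfrac{1}{p}\Z$, made nonnegative at the origin, and made to satisfy the step inequalities everywhere requires reconciling the shortest-path bounds induced by $f$-increasing with the $\vp_i(\offset_i) \ge 0$ lower bounds on the torus. Correctness on every reachable configuration (not only the valid initial one) then follows because the telescoping identity for cumulative $Y'$ is invariant under any interleaving of consumption reactions, and output-obliviousness is immediate since $Y$ never appears as a reactant.
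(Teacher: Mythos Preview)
Your min construction is correct and is exactly the paper's approach: duplicate the inputs, run each $\C_i$ on its own disjoint species, and combine via $Y_1+\cdots+Y_k\to Y$.

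Your combination-on-grids construction, however, takes a very different and much harder route than the paper, and it does not prove the lemma as stated. The paper never opens up the $f_j$; instead it \emph{reduces the first claim to the second}. For each grid $\dom_j$, the CRN stably produces an ``input'' $\n^{(j)}\in\dom_j$ with $\n\le\n^{(j)}$ and $\n=\n^{(j)}$ whenever $\n\in\dom_j$ (i.e., it rounds each coordinate up to the nearest point of $\dom_j$), then feeds $\n^{(j)}$ to the black-box CRN $\C_j$ to obtain $y_j=f_j(\n^{(j)})$, and finally outputs $\min_j y_j$ via your own min construction. Since $f$ is increasing, $y_j\ge f(\n)$ always, with equality when $\n\in\dom_j$, so the min is $f(\n)$. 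This argument is entirely indifferent to what the $f_j$ look like internally.

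Your approach instead inspects the structure of the $f_j$, derives structural facts (slope agreement, existence of a torus potential $g$), and synthesizes a single bespoke CRN. At best this handles only the two specific shapes of $f_j$ the paper happens to plug into the lemma (affine and partial fissure), so it does not establish the lemma for arbitrary output-oblivious $f_j$ on grids. Moreover the fissure subcase is asserted to be ``handled analogously,'' but layering the offset-tracking leader over the already delicate deficit and boundary-line bookkeeping of Lemma~\ref{lem:fisureoo} across multiple grids is far from a routine extension, and you have not given any indication of how the reaction coefficients would be chosen there. The paper's reduction sidesteps all of this: once you have min, the combination case is essentially free.
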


\begin{proof}(Sketch)
Let $f$ be a finite combination of output-oblivious functions, say $f_1, f_2, \ldots, f_m$, whose domains are grids. We first describe the construction for the case that the domain $\dom_j$ of $f_j$ is a two-way-infinite grid for all $j, 1\le j \le m$. Let the
offset of the $j$th grid be $\offset_j = (o_{j,1}, o_{j,2})$.  On input $\n$, our CRN $\C$ first produces $m$ distinct ``inputs'' $\nj{j} \in \N^2$ such that $\n \le \nj{j}$ and $\n = \nj{j}$ if $\n \in \dom_j$.  From these, $\C$ produces $m$ ``outputs'' $y_j = f_j(\nj{j})$, using CRNs $\C_j$ for each $f_j$. Finally, $\C$ produces $y = \min\{y_1, \ldots, y_m\}$.

To see that such a $\C$ is correct, i.e., that $y = f(\n)$, note that if $\n \in \dom_j$ then $y_j = f_j(\nj{j}) = f_j(\n) = f(\n)$, since $\n = \nj{j}$, and if $\n \not \in \dom_j$ then $y_j = f_j(\nj{j}) \ge f(\n)$, since $f$ is increasing and $\nj{j} \ge \n$.
Thus $f(\n) = \min \{y_1, \ldots, y_m \}= y$.  The details of producing the $\n^{(\vb{j})}$s and the output are in Appendix \ref{app:proof_stiching}. 

When $f$ is the min of a finite number of output-oblivious functions, say $f_1, f_2, \ldots, f_m$, we can similarly stably compute each $f_i$ using an output-oblivious CRN $\C_i$ such that the species for each $\C_i$ are distinct, and then take the min of the outputs as the result.
\end{proof}

We complete this section by proving the sufficiency (if) direction of Theorem \ref{thm:output_oblivous_iff}. \\

\noindent
{\bf Theorem 
\ref{thm:output_oblivous_iff} (if direction).}
\emph{A semilinear function $f:\N^2\to\N$ is output-oblivious if $f$ is increasing and is either grid-affine or the minimum of finitely many fissure functions.}

\begin{proof}
First suppose that $f$ is grid-affine. Then by Definition \ref{def:grid-affine}, $f$ is a combination of affine functions $f_1,\ldots,f_m$ whose domains
$\Grid_1, \ldots, \Grid_m$ respectively are grids.  By Lemma \ref{lem:partialgridoo}, $f_i: \Grid_i \to \N$ is output-oblivious, $1\le i \le m$.  By Lemma \ref{lem:stiching}, $f$
is output-oblivious.

Otherwise $f$ is the min of finitely many complete fissure functions
$f_1,\ldots,f_m$.  By Definition \ref{def:fissure_function}, each $f_i$ is a combination of partial fissure functions on grids of period $p$, for some $p \in \N$. By Lemma \ref{lem:fisureoo}, each of these partial fissure functions is
output-oblivious. By Lemma \ref{lem:stiching}, each $f_i$ is output-oblivious
and also $f$ is output-oblivious.
\end{proof}

\section{Proof of Necessity in Theorem \ref{thm:output_oblivous_iff}}
\label{sec:necessity_proof}

In this section we prove that if a function is output-monotonic, then it is either grid-affine or the minimum of finitely many fissure functions. 
In Section \ref{sec:impossibility_lemmas} we describe two conditions on a function which ensure that it is not output-oblivious. In Section \ref{sec:properties_semi_affine} we show two technical results which are needed in Section \ref{sec:necessity-proof-subsec}, which contains the proof of necessity. 
The arguments made in this section pertain to output-monotonic functions, allowing us to both characterize this set of functions and output-oblivious functions since any output-oblivious function is clearly output-monotonic. 

\subsection{Impossibility Lemmas}
\label{sec:impossibility_lemmas}

The results of this section use Dickson's Lemma:

\begin{lemma}(Dickson's Lemma \cite{dickson1913})
Any infinite sequence in $\N^k$ has an infinite, non-decreasing subsequence.
\end{lemma}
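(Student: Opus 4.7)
The plan is to prove Dickson's Lemma by induction on the dimension $k$, handling the one-dimensional case directly and then bootstrapping to higher dimensions by projecting coordinates.

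For the base case $k=1$, given an infinite sequence $(a_n)_{n\ge 1}$ in $\N$, I would split into two cases. If the sequence is bounded above, then by the pigeonhole principle some value $v \in \N$ occurs infinitely often, and the corresponding subsequence is constant and hence non-decreasing. If the sequence is unbounded, then every finite prefix is bounded, so for each index $n$ there exists some $m > n$ with $a_m > a_n$; this lets me construct a strictly increasing (and thus non-decreasing) subsequence inductively by choosing $i_{t+1} > i_t$ with $a_{i_{t+1}} > a_{i_t}$.

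For the inductive step, suppose the lemma holds for $\N^{k-1}$, and let $(\v_n)_{n \ge 1}$ be an infinite sequence in $\N^k$. I would first project onto the first coordinate to get an infinite sequence $(v_{n,1})_{n \ge 1}$ in $\N$; the $k=1$ case yields an infinite subsequence $(\v_{n_t})_{t \ge 1}$ whose first coordinates are non-decreasing. Then, viewing the last $k-1$ coordinates of this subsequence as an infinite sequence in $\N^{k-1}$, I would apply the inductive hypothesis to extract a further infinite subsequence $(\v_{n_{t_s}})_{s \ge 1}$ that is componentwise non-decreasing in coordinates $2, \ldots, k$. Since it is a subsequence of $(\v_{n_t})$, it is still non-decreasing in the first coordinate too, and hence componentwise non-decreasing.

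There is no real obstacle here: the argument is entirely standard, and the only care needed is in the base case to cover both the bounded and unbounded regimes, and in the inductive step to preserve the first-coordinate monotonicity when passing to a further subsequence. A slightly slicker alternative for the base case would avoid the case split by using \emph{peak indices}: say $i$ is a peak if $a_i > a_j$ for every $j > i$; since $\N$ admits no infinite strictly decreasing sequence, there are only finitely many peaks, and past the last peak every term has a later term at least as large, enabling a greedy construction of the non-decreasing subsequence.
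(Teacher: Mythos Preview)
Your argument is correct and entirely standard. Note, however, that the paper does not actually prove this lemma: it is stated with a citation to Dickson~\cite{dickson1913} and used as a black box in the impossibility arguments of Section~\ref{sec:impossibility_lemmas}. So there is no ``paper's own proof'' to compare against; your induction-on-$k$ proof (with either the bounded/unbounded case split or the peak-index variant for $k=1$) is exactly the textbook route and would serve perfectly well if an independent proof were desired.
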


\begin{lemma}
\label{lem:impossibility1}
Let $f:\N^2\to\N$ be a semiaffine function. Suppose that $f = \vp_i$ on $\dom_i$ and $f = \vp_j$ on $\dom_j$, where $\dom_i$ and $\dom_j$ lie on the same grid,
$\dom_i$ is a wedge domain and for some two-way-infinite line $L$ in $\dom_j$,
$\vp_j > \vp_i$ on $L$. Then $f$ cannot be stably computed by an output-monotonic CRN. 
\end{lemma}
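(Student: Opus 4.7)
The plan is to argue by contradiction: assume $\C$ is an output-monotonic CRN stably computing $f$, and combine the wedge geometry of $\dom_i$ with Dickson's Lemma to produce an execution incompatible with stable output. First I would build paired input sequences. Since $\dom_i$ is a wedge and $L\subseteq\dom_j$ is a two-way-infinite line on the same grid $\G$, the line $L$ must be parallel to one of the wedge's bounding lines (otherwise $L$ would have to enter $\dom_i$, contradicting $\dom_i\cap\dom_j=\emptyset$); consequently, for every sufficiently large $\vb{m}\in L$ there is a grid point $\n\in\dom_i$ with $\n\le\vb{m}$ and $\vb{m}-\n$ equal to a bounded grid vector. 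Via pigeonhole I fix sequences $\{\n^{(k)}\}\subseteq\dom_i$ and $\{\vb{m}^{(k)}\}\subseteq L$ tending to infinity along the direction $\vb{d}$ of $L$, with $\vb{m}^{(k)}-\n^{(k)}=\vb{e}$ a fixed vector for every $k$. For each $k$ let $\c_k$ be a stable configuration reachable from the initial configuration for $\n^{(k)}$, so $\c_k(Y)=\vp_i(\n^{(k)})$, and let $\tilde\c_k$ be a stable configuration reached from $\c_k+\vb{e}X$ (itself reachable from the initial configuration for $\vb{m}^{(k)}$), so $\tilde\c_k(Y)=\vp_j(\vb{m}^{(k)})$.

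Next I would pump via Dickson's Lemma. Applying it to $\{(\c_k,\tilde\c_k)\}\subseteq\N^{|\spec|}\times\N^{|\spec|}$ and re-indexing, both sequences are coordinatewise non-decreasing. Writing $\c_k=\c_1+\vb{D}_k$ with $\vb{D}_k\ge 0$, the reactions carrying $\c_1+\vb{e}X$ to $\tilde\c_1$ remain applicable from $\c_k+\vb{e}X=\c_1+\vb{D}_k+\vb{e}X$ and reach $\tilde\c_1+\vb{D}_k$, a configuration itself reachable from the initial configuration for $\vb{m}^{(k)}$. Stable computation forces $\C$ to continue from $\tilde\c_1+\vb{D}_k$ to a stable configuration with $\vp_j(\vb{m}^{(k)})$ copies of $Y$, but the current $Y$-count there is $\vp_j(\vb{m}^{(1)})+\vp_i(\n^{(k)})-\vp_i(\n^{(1)})$. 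The continuation must therefore create
\[
\Delta Y_k \;=\; \bigl(\vp_j(\vb{m}^{(k)})-\vp_j(\vb{m}^{(1)})\bigr)-\bigl(\vp_i(\n^{(k)})-\vp_i(\n^{(1)})\bigr)
\]
additional copies of $Y$. Since $\vp_j>\vp_i$ on $L$ and both are affine, $\Delta Y_k>0$ for every $k$, and $\Delta Y_k$ is either strictly unbounded in $k$ (when the directional derivatives of $\vp_j$ and $\vp_i$ along $\vb{d}$ differ) or a positive constant $C$.

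The main obstacle is turning these growth phenomena into a genuine impossibility. In the unbounded subcase, the contribution $\vb{D}_k$ inside $\tilde\c_1+\vb{D}_k$ can produce no $Y$ on its own -- together with $\c_1$ it forms the stable $\c_k$ -- so the $\Delta Y_k$ extra outputs must be generated by the fixed perturbation $\tilde\c_1-\c_1$ acting catalytically on $\vb{D}_k$. A further application of Dickson's Lemma to the continuation executions from $\tilde\c_1+\vb{D}_k$, combined with monotonicity of reaction applicability, lets me splice the continuations for two indices $k_1<k_2$ to exhibit an execution reachable from the initial configuration for $\vb{m}^{(k_1)}$ that produces strictly more than $\vp_j(\vb{m}^{(k_1)})$ copies of $Y$, contradicting output-monotonic stable computation of $f$. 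The constant-$C$ subcase is the delicate one: a single $\dom_i$-to-$L$ crossing pays only $C$, so I would replace the simple pair $(\n^{(k)},\vb{m}^{(k)})$ by a staircase inside $\dom_i\cup L$ that repeatedly crosses $L$ and returns to $\dom_i$ via transverse grid steps -- a construction that is possible exactly because $\dom_i$ is a wedge rather than a one-dimensional line -- so that many constant-$C$ deficits accumulate into an unbounded effective $\Delta Y_k$, reducing the constant case to the unbounded one. Keeping all the Dickson subsequences compatible with this staircase construction is what I expect to be the most technically delicate step.
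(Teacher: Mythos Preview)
Your construction has a genuine gap in the computation of $\Delta Y_k$. Because you fix a single offset $\vb{e}=\vb{m}^{(k)}-\n^{(k)}$ and move both sequences in the direction $\vb{d}$ of $L$, affinity gives
\[
\Delta Y_k=\bigl(\vp_j(\vb{m}^{(k)})-\vp_i(\n^{(k)})\bigr)-\bigl(\vp_j(\vb{m}^{(1)})-\vp_i(\n^{(1)})\bigr)
=(\vp_j-\vp_i)(\vb{m}^{(k)})-(\vp_j-\vp_i)(\vb{m}^{(1)}),
\]
which is the \emph{increment} of the affine function $\vp_j-\vp_i$ along $L$, not its value. The hypothesis $\vp_j>\vp_i$ on $L$ says nothing about this increment: if $\vp_j-\vp_i$ happens to be constant along $L$ (e.g.\ $\vp_j=\vp_i+1$ there), then $\Delta Y_k=0$ for every $k$, not a positive constant $C$. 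Your ``constant-$C$ subcase'' and staircase repair therefore do not apply, and the proposed splicing argument produces no overshoot. (The opening geometric claim is also off: a two-way-infinite line disjoint from a wedge need not be parallel to either bounding line of the wedge, so the restriction to a fixed offset $\vb{e}$ is not forced.)

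The paper's proof sidesteps exactly this issue by exploiting the wedge property differently. It picks the sequence $\vb{p}_k\in\dom_i$ along a line \emph{not} parallel to $L$, and then chooses increments $\n_k$ so that $\vb{p}_k+\n_k\in L$ while $\vb{p}_{k'}+\n_k\in\dom_i$ for $k'>k$. After Dickson's Lemma gives $\vb{c}_{k'}\ge\vb{c}_k$, the execution segment that on input $\vb{p}_k+\n_k$ produces the extra $\vp_j(\vb{p}_k+\n_k)-\vp_i(\vb{p}_k)$ outputs can be replayed from $\vb{c}_{k'}$. The resulting $Y$-count on input $\vb{p}_{k'}+\n_k\in\dom_i$ is
\[
\vp_i(\vb{p}_{k'})+\vp_j(\vb{p}_k+\n_k)-\vp_i(\vb{p}_k)
=\vp_i(\vb{p}_{k'}+\n_k)+\bigl(\vp_j(\vb{p}_k+\n_k)-\vp_i(\vb{p}_k+\n_k)\bigr),
\]
and the excess is a \emph{value} of $\vp_j-\vp_i$ at a point of $L$, hence strictly positive by hypothesis. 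One overshoot suffices; no unboundedness, no staircase, and no second application of Dickson's Lemma are needed. The missing idea in your approach is precisely this non-parallel choice of the $\dom_i$-sequence, which converts the hypothesis about values on $L$ directly into an overshoot at a point of $\dom_i$.
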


\begin{proof}
Suppose to the contrary that CRN $\C$ stably computes $f$. Either $\dom_i$ is counter-clockwise to $\dom_j$ or vice versa. Assume it is the latter; the proof is similar if the orientation
of the domains is reversed. 

Let $\{\vb{p}_k\}_{k\in\N}$, be an infinite sequence of points 
in $\dom_i$ that are strictly increasing in the first dimension, and form a line which is not parallel to $L$ (this is possible since $\dom_i$ is a wedge domain). 
Let $\vb{c}_k$ be a stable configuration reached on a computation of $\C$ on input $\vb{p}_k$.
Applying Dickson's Lemma, choose an infinite subsequence of
$\{\vb{c}_k\}$ and renumber so that 
$\vb{c}_{k} \le \vb{c}_{k+1}$ for all $k$. 
Let $\n_{1}, \n_{2}, \ldots$  be another strictly increasing sequence in $\N^2$
such that
$\vb{p}_k + \n_{k} \in L$ and
$\vb{p}_{k'} + \n_{k} \in \dom_i$ for $k' > k$. Such a sequence exists because $\{\vb{p}_k\}$ is not parallel to $L$, and $L$ is two-way-infinite. 

If $\C$ is correct, then on input $\vb{p}_k + \n_{k}$ some execution sequence of $\C$ first  reaches configuration $\vb{c}_k$, which has $\vp_i(\vb{p}_k)$ copies of $Y$, and then outputs $\vp_j(\vb{p}_k + \n_{k}) - \vp_i(\vb{p}_k)$ additional $Y$s (since $\vb{p}_k+\n_k\in\dom_j$). Let $k' > k$. On input
$\vb{p}_{k'} + \n_{k}$ (which is in $\dom_i$), $\C$ can output a number of $Y$s equal to:
\[
\vp_i(\vb{p}_{k'}) + \vp_j(\vb{p}_k + \n_{k}) - \vp_i(\vb{p}_k).
\]
This occurs when $\C$ first produces stable output 
$\vp_i(\vb{p}_{k'})$ while consuming input $\vb{p}_{k'}$ and reaching
configuration $\vb{c}_{k'}$. Since $\vb{c}_{k'}\geq \vb{c}_k$, it can then follow the same execution that it would follow from $\vb{c}_k$
to produce $\vp_j(\vb{p}_k + \n_{k}) - \vp_i(\vb{p}_k)$ copies of $Y$ (since all the necessary species are available). 
However, since $\vp_j > \vp_i$ on $L$, the number of $Y$s produced is greater than
\[
\vp_i(\vb{p}_{k'}) +  \vp_i(\vb{p}_k + \n_{k}) - \vp_i(\vb{p}_k) = \vp_i(\vb{p}_{k'} + \n_{k})
\]
(the equality here follows since $\vp_i$ is affine).  Thus too many $Y$'s can be output by $C$ on input $\vb{p}_{k'}+\n_{k}$, and so $C$
cannot be output-monotonic.
\end{proof}

\begin{lemma}
\label{lem:impossibility2}
Let $f:\N^2\to\N$ be a semiaffine function. Suppose that $f = \vp_i$ on $\dom_i$ and $f = \vp_j$ on $\dom_j$, where $\dom_i$ and $\dom_j$ are wedge domains on the same grid
$\G$ such
that (i) $\vp_i = \vp_j$ on $\N^2$ and (ii) there is a two-way-infinite line $L\subset\G$ separating $\dom_i$ and $\dom_j$, with $\vp_L < \vp_i
\;(=\vp_j)$ on $L$.  Then $f$ cannot be stably computed by an
output-monotonic CRN.
\end{lemma}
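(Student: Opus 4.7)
The plan is to mimic Lemma~\ref{lem:impossibility1}: argue by contradiction using Dickson's Lemma, but with a geometric configuration tailored to the ``equal-formula'' hypothesis $\vp_i = \vp_j$. Suppose an output-monotonic CRN $\C$ stably computes $f$, and write $\vp$ for the common value of $\vp_i$ and $\vp_j$ on $\N^2$. First I would observe that the function $\vp(\n) - \vp_L(\n)$, viewed along the two-way-infinite line $L$, is affine in the parametrization of $L$ and strictly positive by hypothesis; an affine function on a two-way-infinite line that is everywhere positive must be a positive integer constant $h$, so $\vp_L(\n) = \vp(\n) - h$ for every $\n \in L$.

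Next I would select an infinite sequence $\vb{p}_1, \vb{p}_2, \ldots \in \dom_i$ marching into the wedge along a grid line not parallel to $L$, so that the signed distances from $L$ strictly increase. Let $\c_k$ be a stable configuration of $\C$ reached from input $\vb{p}_k$, which holds $\vp(\vb{p}_k)$ copies of $Y$; by Dickson's Lemma I pass to a subsequence (and relabel) so that $\c_k \le \c_{k+1}$ for every $k$. The key combinatorial step is to choose $\n_k \in \N^2$ satisfying both (a) $\vb{p}_k + \n_k \in \dom_j$, pushing $\vb{p}_k$ across $L$ into the opposite wedge, and (b) $\vb{p}_{k+1} + \n_k \in L$, so that the same shift applied to the further-out point lands exactly on the separating line. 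The strictly increasing signed distance from $L$ is what allows (a) and (b) to be simultaneously realized by a non-negative lattice vector.

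To finish, on input $\vb{p}_k + \n_k \in \dom_j$ some execution of $\C$ consumes only the $\vb{p}_k$ portion first, reaching $\c_k + \n_k$ with $\vp(\vb{p}_k)$ copies of $Y$, and then follows a reaction sequence $\pi_k$ that consumes $\n_k$ and produces $\vp(\vb{p}_k + \n_k) - \vp(\vb{p}_k)$ additional copies of $Y$ to stabilize. Now consider input $\vb{p}_{k+1} + \n_k \in L$, whose correct output is $\vp(\vb{p}_{k+1} + \n_k) - h$. Some execution first consumes $\vb{p}_{k+1}$ to reach $\c_{k+1} + \n_k$ with $\vp(\vb{p}_{k+1})$ copies of $Y$; because $\c_{k+1} \ge \c_k$, the reaction sequence $\pi_k$ remains applicable and produces another $\vp(\vb{p}_k + \n_k) - \vp(\vb{p}_k)$ copies of $Y$. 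By affineness of $\vp$ the cumulative output is exactly $\vp(\vb{p}_{k+1} + \n_k)$, which exceeds the permitted value by $h > 0$; output-monotonicity then prevents the count from ever decreasing, contradicting stable computation. I expect the main obstacle to lie in the geometric verification that the non-negative lattice shifts $\n_k$ really do exist with the required properties; this should follow from the wedge structure of $\dom_i$ with $L$ as one of its bounding lines, combined with the strict monotonicity of the signed distance along the chosen marching line.
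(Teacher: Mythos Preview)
The paper omits the proof of this lemma (it is deferred to the full version), so there is nothing to compare against directly; your argument is the natural adaptation of the paper's proof of Lemma~\ref{lem:impossibility1} and is almost certainly what the authors intend.

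One correction: your opening step is mis-justified. In the paper's terminology, a line being \emph{two-way-infinite} means both coordinate projections are infinite (Section~\ref{sec:prelims_sets}), not that the parameter ranges over~$\Z$; so a positive affine function along such a line need not be constant. Fortunately you never actually use the constancy of~$h$: at the end you only need $\vp_L(\vb{p}_{k+1}+\n_k) < \vp(\vb{p}_{k+1}+\n_k)$, which is exactly hypothesis~(ii). So simply drop the first paragraph and carry $\vp_L$ through to the final inequality.

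The geometric step you flag does work. Since $L$ is two-way-infinite its direction vector has both components positive, so one of the unit directions $(1,0)$ or $(0,1)$ crosses from the $\dom_i$ side of $L$ to the $\dom_j$ side; say it is $(1,0)$. March the $\vb{p}_k$ along a $(0,1)$-direction line inside $\dom_i$ (so the signed distance to $L$ strictly increases), extract the Dickson subsequence and relabel, and for each pair take $\n_k$ of the form $c_k(1,0)$ plus a suitable multiple of the direction of $L$ so that $\vb{p}_{k+1}+\n_k \in L$. Because $\vb{p}_k$ is strictly closer to $L$ than $\vb{p}_{k+1}$, the point $\vb{p}_k+\n_k$ lands strictly on the $\dom_j$ side; by first translating all the $\vb{p}_k$ by a large common multiple of the direction of $L$ you can guarantee these points avoid any finite ``corner'' of the wedge $\dom_j$ and lie inside it. The affine identity $\vp(\vb{p}_{k+1}) + \vp(\vb{p}_k+\n_k) - \vp(\vb{p}_k) = \vp(\vb{p}_{k+1}+\n_k)$ then gives the overshoot, and output-monotonicity finishes the contradiction exactly as you wrote.
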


The proof is similar to the previous lemma and is in Appendix~\ref{app:impossibility2}. 

\subsection{Properties of Increasing Semiaffine Functions}
\label{sec:properties_semi_affine}
Here we show several useful properties of increasing semiaffine functions $f:\N^2\to\N$. 
The proofs are in the Appendices \ref{app:technical1proof} and \ref{app:technical2proof}. 
 For a partial affine function $\vp_i$ with domain $\dom_i$, write $\vp_i(\n)=\la \a_i,\n\ra+a_{0,i}$ (where $\la\cdot,\cdot\ra$ denotes the standard inner product).  We say that a line $L$ is a \emph{constant distance} from $\dom_i$ if there exists some constant $K$ such that for all $\n\in L$ there is some $\vb{c}(\n)$ with $-(K,K)\leq \vb{c}(\n)\leq (K,K)$ and $\vb{n}+\vb{c}(\n)\in\dom_i$.

\begin{lemma}
\label{lem:line_offset}
Let $f:\N^2\to\N$ be an increasing semiaffine function. Suppose that $f = \vp_i$ on $\dom_i$ and $f = \vp_j$ on $\dom_j$, where $\vp_i$ and $\vp_j$ are partial affine functions, $\dom_i$ and $\dom_j$ are two-way infinite domains,
and some line $L$ in $\dom_j$ is a constant distance from $\dom_i$.
 Then there exists some $\kappa\in\Q$ such that $\la \a_i,\n\ra = \la \a_j,\n\ra + \kappa$ for all $\n\in L$. 
\end{lemma}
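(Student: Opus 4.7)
Let $\vb{v}$ be a primitive direction vector of $L$, so $L = \{\n_0 + t\vb{v} : t \in \N\}$. Since $\vp_i - \vp_j = \la \a_i - \a_j, \n\ra + (a_{0,i} - a_{0,j})$ is affine, the conclusion $\la \a_i, \n\ra = \la \a_j, \n\ra + \kappa$ for all $\n \in L$ is equivalent to $\la \a_i - \a_j, \vb{v}\ra = 0$ (then $\kappa = \la \a_i - \a_j, \n_0\ra$). An affine function that is bounded on an infinite subsequence of $L$ advancing in $t$ must have zero derivative along $\vb{v}$, so it suffices to exhibit an infinite subsequence of $L$ on which $|\vp_i - \vp_j|$ is uniformly bounded.

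First I would fix a common shift by pigeonhole. For each $\n \in L$ the hypothesis gives $\vb{c}(\n) \in [-K,K]^2 \cap \Z^2$ with $\n + \vb{c}(\n) \in \dom_i$; since that set is finite, a single $\vb{c}$ is shared by an infinite subsequence $\n_m = \n_0 + t_m \vb{v}$. On this subsequence $f(\n_m) = \vp_j(\n_m)$ and $f(\n_m + \vb{c}) = \vp_i(\n_m) + \la \a_i, \vb{c}\ra$, whence
\[ \vp_i(\n_m) - \vp_j(\n_m) \;=\; \bigl(f(\n_m + \vb{c}) - f(\n_m)\bigr) - \la \a_i, \vb{c}\ra. \]
Since $\la \a_i, \vb{c}\ra$ is independent of $m$, it remains only to bound $|f(\n_m + \vb{c}) - f(\n_m)|$ uniformly in $m$.

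To obtain this bound I would split $\vb{c} = \vb{c}^+ - \vb{c}^-$ into its componentwise positive and negative parts (so $\vb{c}^+, \vb{c}^- \ge \vb{0}$). Then both $\n_m$ and $\n_m + \vb{c}$ lie componentwise between $\n_m - \vb{c}^-$ and $\n_m + \vb{c}^+$, so the increasing property of $f$ yields $|f(\n_m) - f(\n_m + \vb{c})| \le f(\n_m + \vb{c}^+) - f(\n_m - \vb{c}^-)$. A second application of pigeonhole places both outer points in fixed pieces of the semiaffine partition of $f$, with slope vectors $\a_{l^+}$ and $\a_{l^-}$, so the right-hand side becomes an affine function of $t_m$ whose leading coefficient is $\la \a_{l^+} - \a_{l^-}, \vb{v}\ra$.

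The main obstacle is showing this leading coefficient vanishes, i.e.~that the two outer affine pieces grow at the same rate along $\vb{v}$. I expect to handle this by another use of monotonicity: $\vp_j(\n_m) = f(\n_m)$ lies in $[f(\n_m - \vb{c}^-), f(\n_m + \vb{c}^+)]$ for every $m$, so matching leading coefficients in $t_m$ pins $\la \a_j, \vb{v}\ra$ into $[\la \a_{l^-}, \vb{v}\ra, \la \a_{l^+}, \vb{v}\ra]$, and the symmetric statement using $\vp_i$ on the shifted line $L + \vb{c}$ pins $\la \a_i, \vb{v}\ra$ into the same interval. Iterating the sandwich on the shifted lines $L \pm \vb{c}^{\pm}$ (each itself at bounded distance from $\dom_i$) and shrinking to narrower boxes collapses the interval to a point, so the sandwich width is bounded independently of $m$. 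Then $|\vp_i(\n_m) - \vp_j(\n_m)| = O(1)$ on an infinite subsequence of $L$, forcing $\la \a_i - \a_j, \vb{v}\ra = 0$ and completing the proof.
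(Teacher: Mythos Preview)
Your reduction to showing $\la \a_i-\a_j,\vb{v}\ra=0$ is correct, and the pigeonhole step fixing a single offset $\vb{c}$ is fine. The genuine gap is the last paragraph. Having pinned both $\la \a_i,\vb{v}\ra$ and $\la \a_j,\vb{v}\ra$ into the interval $[\la \a_{l^-},\vb{v}\ra,\la \a_{l^+},\vb{v}\ra]$ does not make them equal, and your ``iterating the sandwich on the shifted lines $L\pm\vb{c}^{\pm}$ and shrinking to narrower boxes'' does not close the loop: iterating just produces new intervals $[\la \a_{(l^{\pm})^-},\vb{v}\ra,\la \a_{(l^{\pm})^+},\vb{v}\ra]$ for yet further pieces, with no mechanism forcing these to contract; and you cannot shrink $\vb{c}$, since it was fixed by pigeonhole. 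So the argument is circular: to bound $\vp_i-\vp_j$ you need to bound $\vp_{l^+}-\vp_{l^-}$, which is the same problem for a different pair of pieces.

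The detour through auxiliary domains $l^+,l^-$ is unnecessary. The paper omits the proof, but the direct route is to sandwich $f(\n+\vb{c}(\n))$ between two values of $f$ \emph{on $L$ itself}. Since $L$ is two-way-infinite its direction $\vb{v}$ has both coordinates positive, so one may pick $T$ with $T\vb{v}\ge(K,K)$; then for every $\n\in L$ one has $\n-T\vb{v}\le \n+\vb{c}(\n)\le \n+T\vb{v}$, whence
\[
\vp_j(\n)-T\la\a_j,\vb{v}\ra \;\le\; \vp_i(\n)+\la\a_i,\vb{c}(\n)\ra \;\le\; \vp_j(\n)+T\la\a_j,\vb{v}\ra.
\]
Since $|\la\a_i,\vb{c}(\n)\ra|\le K\|\a_i\|_1$, this bounds $|\vp_i(\n)-\vp_j(\n)|$ uniformly on $L$, and an affine function bounded on an infinite line is constant there. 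No third piece ever enters, so there is nothing to iterate.
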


\begin{lemma}
\label{lem:equal_coefficients}
Let $f:\N^2\to\N$ be an increasing semiaffine function. Suppose that $f = \vp_i$ on $\dom_i$ and $f = \vp_j$ on $\dom_j$, where $\vp_i$ and $\vp_j$ are partial affine functions.
If there exist two non-parallel lines $I,L\subset \dom_i$ which are both a constant distance from $\dom_j$, then $\a_i=\a_j$. 
\end{lemma}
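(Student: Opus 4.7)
The plan is to apply Lemma \ref{lem:line_offset} once to each of $I$ and $L$ (with the roles of $i$ and $j$ exchanged) and then derive $\a_i = \a_j$ via a short linear-algebra step. The hypotheses of Lemma \ref{lem:line_offset} are symmetric under swapping the two partial functions, and the two non-parallel lines $I,L\subset\dom_i$ force $\dom_i$ to be two-way infinite (their direction vectors span $\Q^2$, so $\dom_i$ has points unbounded in both coordinates). Since $I$ and $L$ lie within a constant distance of $\dom_j$, the domain $\dom_j$ is two-way infinite as well. Applying Lemma \ref{lem:line_offset} with roles swapped, first to $I$ and then to $L$, yields constants $\kappa_I,\kappa_L\in\Q$ such that $\la\a_j,\n\ra = \la\a_i,\n\ra + \kappa_I$ on $I$ and $\la\a_j,\n\ra = \la\a_i,\n\ra + \kappa_L$ on $L$.

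Now set $\v := \a_i - \a_j\in\Q^2$. The two equations become $\la\v,\n\ra = -\kappa_I$ on $I$ and $\la\v,\n\ra = -\kappa_L$ on $L$; in particular, $\la\v,\cdot\ra$ is constant on each of these lines. Subtracting the equation at two distinct points of $I$ gives $\la\v,\vb{u}_I\ra = 0$, where $\vb{u}_I$ is the direction vector of $I$; similarly $\la\v,\vb{u}_L\ra = 0$ for the direction vector $\vb{u}_L$ of $L$. Because $I$ and $L$ are non-parallel, $\vb{u}_I$ and $\vb{u}_L$ span $\Q^2$, and so the only vector orthogonal to both is $\vb{0}$. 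Hence $\v=\vb{0}$, i.e.\ $\a_i=\a_j$.

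The argument is essentially immediate once Lemma \ref{lem:line_offset} is in hand; the main (minor) obstacle is verifying that its hypotheses remain satisfied under the relabeling of $i$ and $j$. That follows from the symmetric form of its statement together with the observation above that both $\dom_i$ and $\dom_j$ must be two-way infinite under the hypotheses of the present lemma. No additional machinery is needed beyond Lemma \ref{lem:line_offset} and a two-line linear-algebra computation.
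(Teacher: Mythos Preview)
Your proposal is correct and matches the approach the paper intends: the paper omits the proof of this lemma (deferring it to the full version), but its placement immediately after Lemma~\ref{lem:line_offset} signals precisely the derivation you give---apply Lemma~\ref{lem:line_offset} (with the roles of $i$ and $j$ exchanged) once to each of the two lines, then use linear independence of their direction vectors to force $\a_i-\a_j=\vb{0}$. Your verification that both domains are two-way infinite, needed to invoke Lemma~\ref{lem:line_offset}, is the only nontrivial prerequisite, and you handle it correctly.
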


What follows is an easy consequence of the previous lemma. Its proof is also found in Appendix \ref{app:technical2proof}. 

\begin{lemma}
\label{lem:equal_functions}
Let $f:\N^2\to\N$ be an increasing semiaffine function.
If there exists a two-way-infinite domain $D$ on which $\vp_i$ and $\vp_j$ are well-defined such that $f(\n)=\vp_i(\n)=\vp_j(\n)$ on all $\n\in D$, then $\vp_i=\vp_j$. 
\end{lemma}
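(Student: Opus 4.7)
The plan is to reduce to Lemma~\ref{lem:equal_coefficients}: writing $\vp_i(\n)=\la\a_i,\n\ra+a_{0,i}$ and $\vp_j(\n)=\la\a_j,\n\ra+a_{0,j}$, it suffices to prove $\a_i=\a_j$ and $a_{0,i}=a_{0,j}$. To set up that application, I first extract two non-parallel one-way-infinite lines $I,L$ contained in $D$. Since $\vp_i$ and $\vp_j$ are partial affine pieces of the semiaffine function $f$, the subset $D$ on which they coincide with $f$ decomposes, by Lemma~\ref{lem:semi_affine_characterization}, into finitely many points, lines, and wedges on a common grid. The two-way-infinite hypothesis forces this decomposition to include a wedge on the grid, and the two non-parallel bounding directions of that wedge contain entire one-way-infinite lines lying inside $D$; I take $I,L$ to be any such pair, so that $I,L\subseteq D\subseteq \dom_i\cap\dom_j$.

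Next I apply Lemma~\ref{lem:equal_coefficients}: the lines $I,L\subset\dom_i$ are non-parallel, and each lies inside $\dom_j$, hence is at constant distance zero from $\dom_j$. The lemma then yields $\a_i=\a_j$. To finish, pick any $\n_0\in D$; the identity $\vp_i(\n_0)=\vp_j(\n_0)$ combined with $\a_i=\a_j$ forces $a_{0,i}=a_{0,j}$, so $\vp_i=\vp_j$ as affine functions on $\N^2$, as required.

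The main obstacle is the first step, namely guaranteeing two non-parallel lines inside $D$: two distinct affine functions can perfectly well agree on a single two-way-infinite diagonal line such as $\{(k,k):k\in\N\}$ without agreeing off of it, so mere two-way-infiniteness of a raw subset of $\N^2$ is not enough. I expect this pathological case to be excluded by the paper's convention that partial affine pieces of a semiaffine function live on the structured domains provided by Lemma~\ref{lem:semi_affine_characterization}, whence a two-way-infinite $D$ must in fact contain a wedge. If an extra argument were needed to handle a thin-slab $D$ bounded by two parallel lines, I would invoke Lemma~\ref{lem:line_offset} on several parallel lines of $D$ translated by grid vectors to pin down the remaining component of $\a_i-\a_j$ transverse to the slab's direction.
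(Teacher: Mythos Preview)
Your reduction to Lemma~\ref{lem:equal_coefficients} is precisely what the paper intends: the paper gives no proof at all and simply remarks that the result ``is an easy consequence of the previous lemma.'' So at the level of approach there is nothing to compare.

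Your caution about the thin case is well placed and in fact flags something the paper glosses over. Under the paper's own definition of ``two-way-infinite'' (both coordinate projections infinite), a domain $D$ may be a single diagonal line on the grid rather than a wedge, and on such a $D$ two distinct affine formulas can agree identically---your example $\{(k,k):k\in\N\}$ with $\vp_i=n_1+n_2$ and $\vp_j=2n_1$ already witnesses this. Hence the statement as literally written does not follow from Lemma~\ref{lem:equal_coefficients} alone. Your proposed fallback via Lemma~\ref{lem:line_offset} on parallel translates would still only constrain the component of $\a_i-\a_j$ along the line's direction, not the transverse one, so it does not close that gap either. The reading under which the lemma really is an ``easy consequence'' is that $D$ is (or contains) a wedge; with that reading your main argument---extract two non-parallel lines inside $D\subseteq\dom_i\cap\dom_j$, apply Lemma~\ref{lem:equal_coefficients} with constant distance zero, then match the constant terms at any point of $D$---is correct and complete.
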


\subsection{Proof of Necessity}
\label{sec:necessity-proof-subsec}
We first provide the main argument in the proof of necessity---that the desired result holds on individual grids. 

\begin{lemma}
\label{lem:only_if_partial_fissures}
Let $f:\N^2\to\N$ be an increasing semiaffine function with period $p$. Then $f$ is output-monotonic only if for any large enough offset $\offset$, $f\on_{\G_\p+\offset}$ is either affine or the minimum of finitely many partial fissure functions.
\end{lemma}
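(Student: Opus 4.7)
My plan is to reduce the problem to a single grid via the semiaffine decomposition, then combine the two impossibility lemmas with the coefficient-matching lemmas of Section \ref{sec:properties_semi_affine} to force the restriction into the desired form. First, I would pick $\offset$ large enough that every zero-dimensional or one-way-infinite domain in the semiaffine decomposition of $f$ is excluded from $\G_\p + \offset$. By Definition \ref{def:semi_affine_function} (via Lemma \ref{lem:semi_affine_function}), $f\on_{\G_\p + \offset}$ is then a finite combination of partial affine functions $\vp_1,\ldots,\vp_k$ with slope vectors $\a_1,\ldots,\a_k$, whose domains are either wedges or two-way-infinite lines on $\G_\p + \offset$.

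Second, I would pairwise apply Lemma \ref{lem:impossibility1} to the wedge domains: output-monotonicity forbids the existence of a two-way-infinite line $L\subset \dom_j$ on which $\vp_j > \vp_i$ everywhere. Since $\vp_j - \vp_i$ is affine, this both restricts the direction of $\a_j - \a_i$ and, through Lemma \ref{lem:line_offset}, controls the constant offset on any shared boundary. Lemma \ref{lem:impossibility2} then handles the ``line'' pieces: on a two-way-infinite boundary line separating two wedges whose affine extensions agree, the partial affine function assigned to the line cannot exceed the common value---i.e., it can only dip downward, matching the fissure-line behaviour of Definition \ref{def:fissure_function}.

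Third, I would case-split on the set of distinct slopes appearing among the wedge-functions. If only a single slope appears, Lemma \ref{lem:equal_functions} propagates equality to all wedges, making $f\on_{\G_\p + \offset}$ affine away from finitely many downward-dipping lines; such dips either disappear (yielding the affine conclusion) or can be absorbed into a trivial partial fissure function. If multiple distinct slopes appear, the constraints of step two force them to group into pairs satisfying the asymmetry $A_1 > B_1$ and $B_2 > A_2$ required by Definition \ref{def:fissure_function}, since any pair violating this asymmetry would admit a two-way-infinite line in one wedge on which the other function dominates, contradicting output-monotonicity.

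The main obstacle will be realizing $f\on_{\G_\p + \offset}$ as a \emph{global} minimum of partial fissure functions rather than as a patchwork of local fissure-like pieces. A wedge-function $\vp_i$ may play the role of $\vp_A$ in one summand and of $\vp_B$ in another; to reconcile these, for each adjacent pair $(\vp_i,\vp_j)$ I would build a partial fissure function $g_{ij}$ whose dominant functions are $\vp_i,\vp_j$, whose dip values on fissure lines copy those of $f$, and which pointwise dominates $f$ on the rest of the grid (possible by the slope ordering of step two). Verifying that $\min_{i,j} g_{ij} = f$ pointwise on the grid---where the slope ordering from Lemma \ref{lem:impossibility1} and the dip-comparisons from Lemma \ref{lem:impossibility2} must interlock precisely---is the delicate final step.
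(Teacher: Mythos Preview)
Your outline matches the paper's proof: restrict to a large-offset grid so that only wedges and two-way-infinite lines remain, use the impossibility and coefficient-matching lemmas to constrain the pieces, build one partial fissure function per boundary ``direction,'' and verify that the pointwise minimum equals $f\on_\Grid$. The paper organizes the assembly by partitioning all line domains and wedge-boundary lines into maximal \emph{parallel classes} $s$ and defining one $f_s$ per class, with $\vp_{A,s}$ and $\vp_{B,s}$ taken from the wedges immediately above and below the class; your ``adjacent pair of wedges'' encodes the same data, since adjacent wedges share a boundary direction.

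One point needs correction, however: you have the role of Lemma~\ref{lem:impossibility2} reversed. It does not show that a separating line ``can only dip downward''---that upper bound comes from Lemma~\ref{lem:impossibility1}, applied with the line domain playing the role of $\dom_j$. Rather, Lemma~\ref{lem:impossibility2} says that if the two flanking wedges carry the \emph{same} affine extension and the separating line dips strictly below it, then output-monotonicity fails. In the paper this is exactly what forces $\vp_A \ne \vp_B$ in each $f_s$, and hence the strict inequalities $A_1>B_1$, $B_2>A_2$ of Definition~\ref{def:fissure_function}: if $\vp_A=\vp_B$, then either some intermediate line dips and Lemma~\ref{lem:impossibility2} gives a contradiction, or no line dips and the two wedges merge, contradicting minimality of the chosen representation. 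The same observation repairs your single-slope case: there is no ``trivial partial fissure function'' into which dips can be absorbed, since the definition forbids $\vp_A=\vp_B$; instead Lemma~\ref{lem:impossibility2} rules the dips out directly, and $f\on_\Grid$ is genuinely affine.
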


\begin{proof}
Fix a representation of $f$ as a semiaffine function with period $p$. Choose $\offset$ large enough so that, if $\Grid = \Grid_p + \offset$, no domains of $f$ that are points or one-way-infinite domains overlap	$\Grid$, and also no two-way-infinite domains of $f$ cross. Assume that $f\on_\G$ is not affine; we need to show that $f\on_\Grid$ is the minimum of partial fissure functions. Assume that the representation of $f$ on $\Grid$ minimizes the number of wedge domains.

Consider all line domains in the representation of $f\on_\Grid$, plus all two-way-infinite lines that define the top or bottom boundaries of wedge domains. Partition these lines into maximal sets 
of parallel lines. For each such set $s$, we define a function $f_s:\Grid \to \N^2$, and show in Claim \ref{claim:partial_fissure} that each $f_s$ is a partial fissure function. Let ${\cal S}$ be the set of all of the sets $s$ of parallel lines. We show in Claim \ref{claim:min_fissures} that $f\on_\Grid = \min_{s \in {\cal S}} f_s$, completing the proof of Lemma \ref{lem:only_if_partial_fissures}. \\

\noindent
{\bf Definition of $f_s$.} Fix any maximal set $s$ of parallel lines. Without loss of generality we assume that $s$ has at least three lines (we can add additional domains to $f$'s representation if needed to ensure this). Some line of $s$, say $L_{A,s}$, defines the lower boundary of a wedge domain; we assume that this is top line of $s$ (we can remove line domains of $s$ above $L_{A,s}$ from $f$'s representation if needed to ensure this). Let $f = \vp_{A,s}$ on this wedge domain, where $\vp_{A,s}$ is a partial affine function. In what follows, we drop the subscript $s$ when referring to these and other domains and functions. Let $\dom_A$ be the wedge of points of $\Grid$ that lie on or above $L_A$. (Note that $\dom_A$ may not be a domain of $f\on_\Grid$ since lines from some other set $s_j$ may lie above the lines of $s$.) Similarly, we can assume that the bottom line, say $L_B$, of $s$ defines the upper boundary of a wedge domain. Let $f = \vp_B (=\vp_{B,s})$ on this wedge domain and let $\dom_B$ be the wedge of points of $\Grid$ that lie on or below $L_B$. 
We can assume without loss of generality (by adding more lines if necessary and further adjusting which lines are $L_A$ and $L_B$) that any line $L$ that lies between, and is parallel to, $L_A$ and $L_B$ is a (possibly empty) domain of $f\on_G$.
Number these lines $L_1, L_2, \ldots$, say from top to bottom, and let $\vp_i$ be the partial affine function that agrees with $f$ on line $L_i$.

Let $f_s:\G\to\N$ be the following function associated with set $s$:
\begin{equation}
\label{eq:partial_fissure_construction}
    f_s(\n)=\begin{cases}
\vp_{A}(\n),&\text{if }\n\in \dom_A, \\
\vp_{i}(\n),&\text{if }\n\in L_i, 2 \le i \le |s|-1, \\
\vp_{B}(\n),&\text{if }\n\in \dom_B.
\end{cases}
\end{equation}

See Figure \ref{fig:min_fissures_construction} for an illustration of the construction. 

\begin{figure}
    \centering
    \includegraphics[scale=0.3]{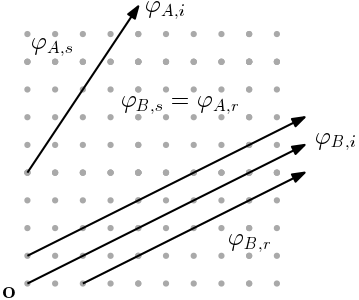}
    \caption{An example of the construction in Lemma \ref{lem:only_if_partial_fissures}. Here $\S=\{s,r\}$, where $s$ is the topmost line and $r$ is the bottommost three lines. In the construction we add two more lines to $s$ since $|s|=1$.  }
    \label{fig:min_fissures_construction}
\end{figure}

\begin{claim}
\label{claim:partial_fissure}
$f_s$ is a partial fissure function.
\end{claim}

\begin{proof}
Note that the lines $L_i$ are parallel to, and have constant distance from, both $L_A$ and $L_B$.  By Lemma \ref{lem:line_offset}, for some $d_{i,A} \in \Q$ we have that $\vp_{i}(\n)=\vp_{A}(\n)-d_{i,A}$ for all $\n$ on line $L_i$. That is, $L_i$ is the set of all $\n$ such that $\vp_A(\n)-\vp_i(\n)=d_{i,A}$. Since $\vp_i(\n)$ and $\vp_A(\n)\in \N$ for all $\n$ in $\Grid \cap L_i$, $d_{i,A}$ must be in $\Z$. Also, since $f$ is output-monotonic, Lemma \ref{lem:impossibility1} shows that $\vp_{i} \le \vp_{A}$ on $L_i$ and so $d_{i,A}$ must be in $\N$. By reasoning similar to that in the last paragraph, $\vp_{i}(\n)=\vp_{B}(\n)-d_{i,B}$ for all $\n$ on line $L_i$, for some $d_{i,B} \in \N$. It follows that $L_i$ is the set of points $\n \in \Grid$ for which $\vp_A(\n) - \vp_B(\n) = k$, for some $k \in \Z$.
It follows 
(by potentially adding yet more lines if necessary so that the number of lines $L_i$ that lie above the line $\vp_A(\n) - \vp_B(\n)=0$ is equal to the number of lines of $s$ that lie below the line $\vp_A(\n) - \vp_B(\n)=0$) that we can represent $f_s(\n)$ as in Definition \ref{def:fissure_function}. 

It remains to show that
$\vp_A<\vp_B \mbox{ on} \dom_A \mbox{ and } \vp_B<\vp_A \mbox{ on} \dom_B$.
Suppose that there exists a point $\n\in \dom_A$ such that $\vp_{A}(\n)\ge\vp_{B}(\n)$. If there are only finitely many such points in any dimension, then we may disregard them by taking $\offset$ sufficiently large. Hence, we may assume that if there is one such point then there are infinitely many and they form a two-way-infinite domain, $D$. 
Let $E=\{\n\in D:\vp_A(\n)>\vp_B(\n)\}$ and $F=\{\n\in D:\vp_A(\n)=\vp_B(\n)\}$. Note that $E\cup F=D$. We consider three cases. 
If $E$ and $F$ are both one-way-infinite, then one is a horizontal line and the other a vertical line. As above, we may disregard these points by taking $\offset$ large enough. 
If $E$ is two-way-infinite we can find a two-way-infinite line $L\subset \dom_A$ such that $\vp_{A}>\vp_{B}$ on $L$. By Lemma \ref{lem:impossibility1}, this contradicts the fact that $f$ is output-monotonic. 
Otherwise, $F$ is two-way-infinite.
By Lemma \ref{lem:equal_coefficients} we see that $\vp_A=\vp_B$. Furthermore, for some $i$, we have $\vp_i<\vp_A$. Otherwise, $d_{i,A}=0$ for all $i$ and consequently the number of wedge domains of $f\on_\G$ can be reduced by merging the domains $\dom_A$, $\dom_B$ and the $L_i$ contradicting our assumption that the representation of $f$ on $\Grid$ minimizes the number of wedge domains. However, $\vp_A$, $\vp_B$ and $\vp_i$ then meet the conditions of Lemma \ref{lem:impossibility2}, a contradiction. The proof that $\vp_B<\vp_A$ on $\dom_B$ is similar.
\end{proof}

\begin{claim}
\label{claim:min_fissures}
$f\on_\Grid = \min_{s \in {\cal S}} f_s$.
\end{claim}
\begin{proof}
 Let $\n\in\dom(f_s)$, so that $f\on_\G(\n)=f_s(\n)$. Suppose to the contrary that $f_{s'}(\n)<f_s(\n)$ for some $s' \in {\cal S}$. We consider the case where the lines of $s'$ lie above those of $s$; the case where the lines of $s'$ lie below those of $s$ is similar. Hence $f_{s'}(\n)=\vp_{A,s'}(\n)$. We consider three distinct cases based on the partitioning of $\dom(f_s)$. 

\begin{enumerate}
    \item $\n\in\dom_{A,s}$ so $f_s(\n)=\vp_{A,s}(\n)$. Applying the same argument as in the proof of Claim \ref{claim:partial_fissure}, since $\vp_{A,s'}(\n)<\vp_{A,s}(\n)$, there is a line $L\subset\dom_{A,s}$ such that $\vp_{A,s'}<\vp_{A,s}$ on $L$. Thus, $\vp_{A,s'}$ and $\vp_{A,s}$ and their respective domains meet the condition of Lemma \ref{lem:impossibility1}, contradicting the fact that $f$ can be stably computed by an output-monotonic CRN. 
    \item $\n\in \dom_{i,s}$ for some $i$ with $2\leq i\leq |s|-1$, so $f_s(\n)=\vp_{i,s}(\n)$. Since $f_{s}$ is a fissure function, we have $\vp_{i,s}(\n)=\vp_{A,s}(\n)-d_i$, for some $d_i \in \N$, hence $f_s(\n)=\vp_{i,s}(\n)\leq \vp_{A,s}(\n)\leq \vp_{A,s'}(\n)=f_{s'}(\n)$.
    Clearly then, it cannot be the case that $f_{s'}(\n)<f_s(\n)$.  
    \item $\n\in\dom_{B,s}$. This is similar to Case 1. 
\end{enumerate}
Since we get a contradiction in all three cases, we conclude that $f_s(\n)$ must be less than or equal to $f_{s'}(\n)$ for all $s' \in {\cal S}$, and the claim is proved.
\end{proof}

This completes the proof of Lemma \ref{lem:only_if_partial_fissures}.
\end{proof}

Lemma \ref{lem:only_if_partial_fissures} describes properties of
increasing semiaffine functions on one grid. We now turn to properties of such functions across grids. The proof of the next lemma builds on that of Lemma \ref{lem:only_if_partial_fissures} and uses Lemma \ref{lem:equal_coefficients} to relate $f$ across grids.

\begin{lemma}
\label{lem:cross_grids}
Let $f:\N^2\to\N$ be an output-monotonic semiaffine function with
period $p$. Let $\offset$ be large enough that
for all $\offset' \ge \offset$,  $f\on_{\G_\p+\offset'}$
is either affine or the minimum of finitely many partial fissure functions. If $f$ is affine on
$\Grid=\Grid_p+\offset$, then $f\on_{\Grid_p + \offset'}$ must also be affine for all $\offset' \ge \offset$, and thus $f$ is grid-affine. Otherwise $f$ is the
minimum of finitely many fissure functions.
\end{lemma}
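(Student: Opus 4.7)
I split the argument into two cases by the form of $f|_\Grid$, mirroring the dichotomy in the conclusion.

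\emph{Case 1 ($f|_\Grid = \vp$ affine).} I first show that $f|_{\Grid_p+\offset'}$ is affine for every $\offset' \ge \offset$. Because $\Grid$ and $\Grid_p+\offset'$ share period $p$, every point of $\Grid_p+\offset'$ lies within distance $\sqrt{2}\,p$ of $\Grid$, so any line contained in $\Grid_p+\offset'$ is a constant distance from $\Grid$. Assume for contradiction that $f|_{\Grid_p+\offset'}$ is not affine. Its semiaffine representation then contains either (i) two wedge domains carrying distinct partial affine functions $\psi_1, \psi_2$, or (ii) a single wedge with value $\psi$ together with a line domain of value $\psi'\ne\psi$. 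In case (i), Lemma \ref{lem:equal_coefficients} (with $\dom_j=\Grid$, $\vp_j=\vp$, applied to two non-parallel lines inside each wedge) forces $\a_{\psi_1}=\a_\vp=\a_{\psi_2}$, so the wedges share a slope and differ by a nonzero constant; Lemma \ref{lem:impossibility1} then contradicts output-monotonicity. In case (ii), Lemma \ref{lem:impossibility1} or Lemma \ref{lem:impossibility2} applies according to whether $\psi'>\psi$ or $\psi'<\psi$ on the line, the latter case using the two sub-wedges obtained by splitting the single wedge along that line. Hence $f|_{\Grid_p+\offset'}$ is affine. To conclude that $f$ is grid-affine, partition $\N^2$ into $p^2$ residue classes modulo $p$: for each class $(i,j)$ choose an offset $\offset^{(i,j)}\ge\offset$ with $\offset^{(i,j)}\equiv(i,j)\pmod{p}$, on which $f$ is affine. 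The finitely many points of $\N^2$ not covered by these $p^2$ grids become zero-dimensional point domains of the grid-affine representation.

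\emph{Case 2 ($f|_\Grid$ not affine).} Lemma \ref{lem:only_if_partial_fissures} gives $f|_\Grid=\min_{s\in\mathcal{S}}f_s^\Grid$ for a finite set $\mathcal{S}$ of direction classes. I construct a complete fissure function $F_s:\N^2\to\N$ for each $s\in\mathcal{S}$ satisfying $f=\min_s F_s$. For each residue class $(i,j)\in\{0,\ldots,p-1\}^2$, pick $\offset^{(i,j)}\ge\offset$ as above and apply Lemma \ref{lem:only_if_partial_fissures} on $\Grid_p+\offset^{(i,j)}$. Using Lemma \ref{lem:equal_coefficients} between a wedge on $\Grid_p+\offset^{(i,j)}$ and the corresponding wedge on $\Grid$, each wedge slope appearing in the local decomposition matches some slope in $\mathcal{S}$, so the local direction classes inject into $\mathcal{S}$. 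Let $F_s^{(i,j)}$ be the partial fissure for direction $s$ (or a dominated partial fissure with the correct slopes if $s$ does not appear locally). Extend $F_s^{(i,j)}$ across the finite corner region $(\Grid_p+(i,j))\setminus(\Grid_p+\offset^{(i,j)})$ by letting it equal $f$ there; this preserves increasingness and semiaffineness and is permitted because the partial-fissure definition only pins down the fissure structure on the large-$\n$ part of its grid. Combining the $F_s^{(i,j)}$ across the $p^2$ residue classes yields a complete fissure function $F_s$. Since $f=\min_s F_s^{(i,j)}$ holds on $\Grid_p+\offset^{(i,j)}$ by Lemma \ref{lem:only_if_partial_fissures} and every $F_s$ agrees with $f$ on the corner region, $f=\min_s F_s$ on $\N^2$.

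\textbf{Main obstacle.} The principal difficulty lies in Case 2: showing that the partial-fissure decompositions on different period-$p$ grids share a common set of direction classes with matching globally-defined partial affine functions $\vp_A^s$ and $\vp_B^s$. This requires combining Lemmas \ref{lem:equal_coefficients} and \ref{lem:line_offset} to simultaneously align slopes and constant offsets of fissure wedges across all $p^2$ grids, and then handling direction classes that appear on some grids but not others by introducing dominated partial fissures. Once this cross-grid coherence is established, the construction of each $F_s$ and the verification that $f=\min_s F_s$ parallels the stitching argument of Lemma \ref{lem:stiching} and is essentially routine.
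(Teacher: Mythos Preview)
The paper omits the proof of this lemma, giving only the one-sentence indication that it ``builds on that of Lemma~\ref{lem:only_if_partial_fissures} and uses Lemma~\ref{lem:equal_coefficients} to relate $f$ across grids.'' Your proposal follows exactly that route, so at the level of strategy it matches the paper's intended approach.

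There is, however, a concrete error you should fix. In both cases you assert that the region left uncovered by the $p^2$ shifted grids $\Grid_p+\offset^{(i,j)}$ consists of finitely many points. This is false: for each residue class, $(\Grid_p+(i,j))\setminus(\Grid_p+\offset^{(i,j)})$ is an infinite L-shaped region (a horizontal and a vertical semi-infinite strip of bounded width, meeting at a finite corner). In Case~1 this matters because the grid-affine representation must cover all of $\N^2$; you need to further decompose each L-shape into finitely many one-dimensional period-$p$ grids (horizontal and vertical rays on which $f$ is eventually affine, since $f$ is semiaffine) plus finitely many residual points. In Case~2 your device of extending $F_s^{(i,j)}$ by $f$ over this region is still sound---the definition of a partial fissure function only constrains the form of the function for $\n\ge\offset$, and the monotonicity check $F_s^{(i,j)}(\n)\le F_s^{(i,j)}(\n')$ across the boundary follows from $f(\n)\le f(\n')\le F_s^{(i,j)}(\n')$---but you should not call the region finite, and you should state explicitly that the extension remains increasing and semiaffine on the full residue-class grid.

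Two minor remarks. In Case~1 your subcase~(ii) is unnecessary: the hypothesis already tells you that when $f|_{\Grid_p+\offset'}$ is not affine it is a minimum of partial fissure functions, so there are wedges $\dom_A,\dom_B$ with $A_1>B_1$; since the whole grid $\Grid$ is a single domain with $f=\vp$, any two non-parallel lines in $\dom_A$ (resp.\ $\dom_B$) are a constant distance from $\Grid$, and Lemma~\ref{lem:equal_coefficients} forces $\a_{\vp_A}=\a_\vp=\a_{\vp_B}$, directly contradicting $A_1>B_1$. In Case~2, your acknowledged ``main obstacle'' is real: to show the direction classes on $\Grid_p+\offset^{(i,j)}$ inject into $\mathcal S$ you must argue that any wedge on that grid contains two non-parallel lines each of which is a constant distance from \emph{some} wedge on $\Grid$, which requires comparing the angular extents of wedges on the two grids; this is where Lemma~\ref{lem:equal_coefficients} is doing the work the paper alludes to, and it deserves a sentence or two more than you give it.
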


\begin{proof}
Let $\Grid'=\Grid_p+\offset'$.  Suppose $f\on_\Grid$ is affine but that $f\on_{\Grid'}$ is not. Write $f\on_\Grid(\n)=\la\a,\n\ra+a_0$.  By Lemma \ref{lem:only_if_partial_fissures}, $f\on_{\Grid'}$ is the minimum of partial fissures functions of the same form as in \eqref{eq:partial_fissure_construction}. Fix one of these functions $f_s$, and let $\vp_A$ and $\vp_B$ be as in \eqref{eq:partial_fissure_construction}. As in the construction, $f\on_{\Grid'}=\vp_A$ on some wedge domain $D_A\subset\dom_A$ ($D_A$ might not be $\dom_A$ for technical reasons discussed in the previous proof)  and similarly, $f\on_{\Grid'}=\vp_B$ on $D_B\subset\dom_B$. Now, $D_A$ and $\Grid$ meet the conditions of Lemma \ref{lem:equal_functions}, so $A_1=a_1$ and $A_2=a_2$ where $\vp_A(\n)=A_1n_1+A_2n_2+A_0$. Likewise, however, we see that $a_1=B_1$ and $a_2=B_2$, where $\vp_B(\n)=B_1n_1+B_2n_2+B_0$. But this contradicts the definition of a fissure function, since we should have $A_1>B_1$ and $A_2<B_2$. Thus $f\on_{\Grid'}$ must also be affine.  

Conversely, suppose $f\on_\Grid$ is the minimum of the partial fissure functions $\{f_s\}_{s\in\S}$. Our goal is to show that $f\on_{\Grid'}$ can be written as the minimum of partial fissures $\{f_s'\}_{s\in \S}$, where $f_s'$ corresponds to $f_s$ but on the grid $\Grid'$. This will demonstrate that each $f_s$ is part of a complete fissure function, and that $f_{\{\n\geq\offset\}}$ is the minimum of these. 

Similarly to above fix one of the functions $f_s$ in the representation of $f\on_{\Grid}$ \eqref{eq:partial_fissure_construction} and let $f\on_\Grid=\vp_A$ on $D_A$ etc.
The following general construction will be useful. Let $E\subset \Grid$ be a line or wedge domain on the grid $\Grid$. $E$ can be described by the intersection of threshold sets with $\G$. We let $E'$ refer to the set defined by the same threshold sets but intersected with $\Grid'$. Intuitively, $E$ and $E'$ are the same set defined on different grids. 
Let $D_s\subset\Grid$ be the wedge domain on which $f\on_\Grid=f_s$ (recall that $f_s$ is defined on all of $\N^2$, but it does not agree with $f\on_\Grid$ everywhere.
As above, using Lemma \ref{lem:equal_functions}, we see that $f\on_{\Grid'}=\vp_A+\kappa_A$ on $D_A'$ for some constants $\kappa_A$. Similarly, $f\on_{\Grid'}=\vp_B+\kappa_B$ on $D_B'$ and, using Lemma \ref{lem:equal_coefficients}, $f\on_{\Grid'}=\vp_i+\kappa_i$ on $L_i'$ for $2\leq i\leq |s|-1$. Thus we can write 
\[f\on_{\Grid'\cap D_s'}(\n)=
\begin{cases}
\vp_{A}(\n)+\kappa_A,&\text{if }\n\in D_A', \\
\vp_{i}(\n)+\kappa_i,&\text{if }\n\in L_i', 2 \le i \le |s|-1, \\
\vp_{B}(\n)+\kappa_B,&\text{if }\n\in D_B'.
\end{cases}
\]
Moreover, this function must obey the properties of a fissure function, otherwise it would not be output-monotonic by Lemma \ref{lem:only_if_partial_fissures}. \end{proof}

We complete this section by proving the necessity (only if) direction of Theorem \ref{thm:output_oblivous_iff}, strengthening it slightly so that it applies also to output-monotonic functions. \\

\noindent
{\bf Theorem 
\ref{thm:output_oblivous_iff} (only if direction).}
\emph{A semilinear function $f:\N^2\to\N$ is output-monotonic only if $f$ is increasing and is either grid-affine or the minimum of finitely many fissure functions.}

\begin{proof}
By Lemma \ref{lem:semi_affine_function}, $f$ is a semiaffine function
with period $p$. By Lemmas \ref{lem:only_if_partial_fissures} and \ref{lem:cross_grids}, $f$ is either grid-affine or the minimum of
fissure functions.
\end{proof}

\section{Conclusions and Future Work}

Here we have characterized the class of functions $f:\N^2 \to \N$ that can be stably computed by output-oblivious and output-monotonic
stochastic chemical reaction networks (CRNs) with a leader.  A natural next step for future work is to generalize the result to functions $f:\N^k \to \N$ for $k >2$; we are optimistic that many of the building blocks that we introduce here for the two-dimensional case will generalize to the multi-dimensional case.

Another natural question is to determine what can be computed when
there is no leader.  By similar reasoning to that of Chalk et
al.~\cite{Chalk-etal-2018} for mass-action CRNs, such functions must
be super-additive, but whether only the super-additive semilinear
functions have output-oblivious stochastic CRNs remains to be
determined.

Yet another direction for future work is to determine whether, for some functions, there is a provable gap between the time needed to stably compute the functions with an output-oblivious CRN and the time needed by a CRN that is not restricted to be output-oblivious. Further directions are to better understand output-oblivious CRN computation when errors are allowed, and whether it is possible to "repair" a CRN that is not output-oblivious so that composition is possible.

\bibliography{ref.bib}

\newpage
\appendix

\noindent{\Large \bf Appendix}

\section{Proof of Lemma \ref{lem:semi_affine_characterization}}
\label{app:set_characterization}

Here we prove Lemma \ref{lem:semi_affine_characterization}. To do this, we first prove, via a sequence of lemmas, that every linear set is in fact a finite union of sets of the form $\G\cap \T$, where $\G$ is a grid and $\T$ is a finite intersection of threshold sets, where moreover the grids have the same period. We use several useful properties of semilinear sets due to Angluin et al.~\cite{angluin2007computational}. 

A \emph{horizontal} threshold set has the form $\{\n: n_2\geq r\}$ for some $r\in\Z$, i.e., the set includes all the points above the horizontal line $n_2=r$. A \emph{vertical} threshold set is defined similarly, including all the points to the right of some vertical line. The \emph{slope} of a threshold set $\{\n:\n\cdot \v\geq r\}$ is the slope of its bounding line $\{\n:\n\cdot\v=r\}$, which is $-v_1/v_2$.    

\begin{definition}
\label{def:wedge-domain}
Let $H$ and $V$ be horizontal and vertical threshold sets, respectively. Let $T_U$ and $T_L$ be threshold sets such that the slope of $T_U$ is strictly greater than that of $T_L$. A set $D\subset\N^2$ is a \emph{wedge domain} if it can be written as $\Grid\cap T_U\cap T_L\cap H\cap V$ for some grid $\Grid$. See Figure \ref{fig:wedge_domain}. 
\end{definition}

\begin{figure}
    \centering
    \includegraphics[scale=0.3]{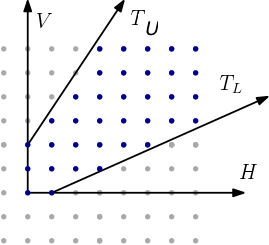}
    \caption{An example of a wedge domain.}
    \label{fig:wedge_domain}
\end{figure}

A \emph{modulo set} is a set of the form $\{\n:\n\cdot\v \equiv r\mod p\}$ for some $\v\in\N^2$, and $r,p\in\N$. 
We call $p$ the \emph{period} of the set. Recall that a \emph{threshold set} has the form $\{\n:\n\cdot\v\geq r\}$ for some $\v\in\Z^2$, $r\in\Z$. 
A \emph{boolean combination} of sets refers to a combination of sets by union and intersection. The following lemma relates semilinear sets to modulo and threshold sets. 

\begin{lemma}[\cite{angluin2007computational}]
\label{lem:angluin_semilinear}
Every semilinear set can be represented as a finite boolean combination of modulo sets and threshold sets. 
\end{lemma}

Next, we make a simple observation concerning the period of grids. 
\begin{lemma}
\label{lem:modulo_period}
Any grid $\Grid=\{(p\alpha_1,q\alpha_2)+(o_1,o_2):\alpha_i\in\N\}$ can be written as a finite union of grids with the same period. 
\end{lemma}
\begin{proof}
Let $k$ be a multiple of both $p$ and $q$. For all $i,j\in\N$ such that $pi<k$ and $qi<k$, let $\offset_{i,j}=(o_1+pi,o_2+qi)$. It is then easily verified that 
\[\Grid=\bigcup_{i,j}\Grid_k+\offset_{i,j}.\qedhere\]
\end{proof}



The next lemma demonstrates that modulo sets are effectively unions of grids in hiding. For example, the modulo set $M=\{\n:2n_1+x_2\equiv 0\mod 3\}$ can be written as $\G_3\cup \G_3+(1,1)\cup\G_3+(2,2)$.

\begin{lemma}
\label{lem:modulo_set_expansion}
Any modulo set can be expressed as the finite union of grids. 
\end{lemma}
\begin{proof}
Let $M=\{\n:\n\cdot\v\equiv a\mod p\}$ be a modulo set. Write $M=\bigcup_{i=0}^\infty M_i$ where $M_i=\{\n:n_1=i\text{ and } v_2n_2\equiv a-v_1i\mod p\}$. Appealing to properties of modular arithmetic, we can write $M_i=\{(i,q\alpha)+(0,v_i):\alpha\in\N\}$ for some $q\in\N$. 

Now, we observe that $\Pi_2(M_i)=\Pi_2(M_j)$ whenever $i\equiv j\mod p$. Indeed, if $x\in\Pi_2(M_i)$, then $v_2x\equiv a-v_1i\mod p\equiv a-v_1j\mod p$ if $i\equiv j\mod p$ since $v_1\in\N$. Hence, $x\in\Pi_2(M_j)$. The other inclusion is identical. This implies that for such $i$ and $j$, $v_i=v_j$ for $i\equiv j\mod p$. For every distinct $r<p$, define 
\begin{equation*}
    F_r = \bigcup_{i\equiv r\mod p}\{(i,q\alpha)+(0,v_i):\alpha_1,\alpha_2\in\N\}=\{(p\alpha_1+r,q\alpha_2+v):\alpha_1,\alpha_2\in\N\},
\end{equation*}
for some $v$, which is a grid. Noting that $\bigcup_{r<p}F_r=\bigcup_{i=0}^\infty M_i=M$ completes the proof. 
\end{proof}

Since the representation of a semilinear set $E$ may also have intersections of modulo sets (according to Lemma \ref{lem:angluin_semilinear}) we must be able to write the intersection of modulo sets as grids as well. We do this by first converting the modulo sets into grids as per Lemma \ref{lem:modulo_set_expansion} and then reasoning about the intersection of grids. The next lemma allows us to write the intersection of these grids as the union of other grids. As an example, consider $A=\{(\alpha_1,3\alpha_2):\alpha_i\in\N\}$ and $B=\Grid_2 + (1,2)$. Here, $A\cap B$ is the grid $\{(2\alpha_1,6\alpha_2)+(1,0):\alpha_i\in\N\}$. 


\begin{lemma}
\label{lem:modular_intersection}
The intersection of two grids can be expressed as the finite union of grids.  
\end{lemma}
\begin{proof}
Let $M=\{(x_1,0)\alpha_1+(0,x_2)\alpha_2+\u:\alpha_i\in\N\}$ and $N=\{(y_1,0)\beta_1+(0,y_2)\beta_2+\v:\beta_i\in\N\}$ be grids. As in the example, set $M_i=\{(x_1,0)i+(0,x_2)\alpha+\u:\alpha\in\N\}$ and do similarly for $N_i$. Then $\bigcup_{i=0}^\infty M_i=M$ and $\bigcup_{i=0}^\infty N_i=N$. Further, $M_i\cap N_j\neq\emptyset$ only if $i=(x_2j+v_1-u_1)/x_1$. 
Fix such $i$ and $j$. We can write $M_i=\{(y_1,0)j+(0,x_2)\alpha +(v_1,u_2) :\alpha\in\N\}$. Now, the $x_2$-projections of $M_i$ and $N_j$ are non-empty whenever $x_2\alpha + u_2=y_2\beta + v_2$. This is a linear diophantine equation that has a solution iff $v_2-u_2|\gcd(x_2,y_2)$. If there is no solution, then $M_i\cap N_j=\emptyset$, implying that $M\cap N=\emptyset$. In this case we're done. Assuming there are solutions, there are infinitely many and they are described by the family $\{(s+ku,t-kv):k\in\Z\}$ where $(s,t)$ is a solution, and $u=y_2/\gcd(x_2,y_2)$, $v=x_2/\gcd(x_2,y_2)$. Let $k$ be minimal such that $s+k u\in\N$. This is therefore the minimum non-negative solution for $\alpha$. It follows that $M_i\cap N_j=C_{ij}=\{(y_1,0)j+(0,x_2)(s+ku)+(v_1,u_2):k\in\N\}=\{(y_1,0)j+(0,x_2u)\alpha + (v_1,u_2+s):\alpha\in\N\}$, and so
\[M\cap N=\bigcup_{i=1}^\infty C_{ij}=\{(y_1,0)\alpha_1+(0,x_2u)\alpha_2+(v_1,u_2+\ell):\alpha_i\in\N\}.\qedhere\]
\end{proof}

Finally, using distributivity of set operations we can prove the main result. 

\begin{lemma}
Every semilinear set can be written as a finite union of sets of the form $\G\cap \T$, where $\G$ is a grid and $\T$ is a finite intersection of threshold sets. Moreover, we may assume that each grid has the same period. \end{lemma}
\begin{proof}
Let $L$ be semilinear. Using Lemma \ref{lem:angluin_semilinear} express $L$ as a finite boolean combination of threshold and modulo sets. Expressing each modulo set as a finite union of grids according to Lemma, \ref{lem:modulo_set_expansion} and then using distributivity of set operations and Lemma \ref{lem:modular_intersection} we can write $L=\bigcup_{j=1}^N \Grid^j\cap \T^j$ for some $N\in\N$ where each $\T^j$ is a finite \emph{intersection} of threshold sets and $\Grid^j$ is a grid.  Finally, applying Lemma \ref{lem:modulo_period}, we may assume that each $\Grid^j$ has the same period. 
\end{proof}

It now remains only to remark that we have in fact proven Lemma \ref{lem:semi_affine_characterization}.

\begin{proof}[Proof of Lemma \ref{lem:semi_affine_characterization}]
Let $L$ be semilinear and using the previous lemma write $L=\bigcup_{j=1}^N \Grid^j\cap \T^j$. Fix $M=\Grid^j\cap \T^j$. If $M$ is finite, then it can be written as the finite union of points. If $\T^j$ includes only the boundary of some threshold set, then $M$ is a line on the grid $\Grid^j$. Otherwise, $M$ is infinite and is not a line. Thus it is two-way infinite. As $n_1$ and $n_2$ grow, there exist unique threshold sets $T_U$ and $T_L$ in $\T^j$ which bound $M$ from above and below, respectively. The effect of the other (finite number of) threshold sets disappear in the limit because they have different slopes. More specifically, threshold sets with a larger slope than $T_U$, and smaller than that of $T_L$ cease to constrain $M$ as the coordinates $n_1$ and $n_2$ grow. Therefore, for some $n_1^*$ and $n_2^*$ large enough, we can write $M=(G^j\cap T_U\cap T_L\cap H\cap V) \cup F$, where $H$ is the horizontal line defined by $n_1^*$, $V$ the vertical line defined by $n_2^*$, and $F=\{(n_1,n_2)\in M: n_1\leq n_1^*, \; n_2\leq n_2^*\}$. Moreover, $F$ is finite since it is constrained above and below by the threshold sets in $T^j$ ($T_U$ and $T_L$ in particular.) 
That is, $M$ can be written as the union of a wedge domain and finitely many points, which completes the proof. 
\end{proof}

\section{Proof of Lemma \ref{lem:semi_affine_function}}
\label{app:semi_affine_functions}

\begin{proof}
From Definition \ref{def:semilinear-classification} (Chen et al.), $f$ can be represented by the partial functions $\{\vp_1,\dots,\vp_m\}$ where $\dom_i$ is linear for each $i$.  
Applying Lemma \ref{lem:angluin_semilinear} and Lemma \ref{lem:modulo_period}, we can write $\dom_i$ as the union of semiaffine sets with the same period, say $p$. Therefore, each grid of period $p$ is covered by a distinct function $f_i$ which is defined by the union of all affine functions whose domain covers this grid. Thus, it follows that $f$ is a periodic combination of partial semiaffine functions on grids. 
\end{proof}

\section{Proof of Lemma \ref{lem:well_defined_outside_wedge}}
\begin{proof}
Let $\vp(\n)=\la \a,n\ra +a_0$ and $\G=\{(x_1\alpha_1,x_2\alpha_2)+\offset:\alpha_i\in\N\}$. Let $\n'\in \G\setminus\T$. Assuming that $\n'\geq\vb{m}$, we can find some $\n\in D$ with $\n\leq \n'$ which shares either the $n_1$ or $n_2$ projection of $\n'$. Assume that it is the former; the other case is similar. We can write $\n'=(x_1,0)\alpha_1+(0,x_2)(\alpha_2+\beta)+\offset$ and $\n=(x_1,0)\alpha_1+(0,x_2)\alpha_2+\y$ for some $\alpha_1,\alpha_2,\beta\in\N$. Notice that $\vp(\n')=\vp(\n)+\beta\la \a,(0,x_2)\ra$ and since $\vp(\n)\in\N$ it remains only to show that $\beta\la \a,(0,x_2)\ra\in\N$. Since $D$ is a wedge domain, the threshold sets which define its upper and lower boundary are not parallel, meaning that we can find two points in $D$, $\n_1$ and $\n_2$ such that $\n_1-\n_2=(0,x_2)\beta$ (indeed there are infinitely many such points). Thus, $\beta\la \a,(0,x_2)\ra=\vp(\n_1)-\vp(\n_2)\in\N$, completing the proof. 
\end{proof}

\section{Proof of Lemma \ref{lem:impossibility2}}
\label{app:impossibility2}
\begin{proof}
The proof is similar to that of Lemma \ref{lem:impossibility1}; we highlight only the differences. Here, choose the strictly increasing sequence $\{n_i\}$ such that 
$\vb{p}_i + (0,n_{i}) \in \dom_j$ and
$\vb{p}_{i+1} + (0,n_{i}) \in L$.
If $C$ is correct, then on input $\vb{p}_i + (0,n_{i})$, $\C$ can
first reach configuration $\c_i$ which has $\phi_i(\vb{p}_i)$ copies of $Y$ 
and then outputs $\vp_j(\vb{p}_i + (0,n_{i})) - \vp_i(\vb{p}_i)$ additional $Y$s. On input $\vb{p}_{i+1} + (0,n_{i})$, $\C$ can output $\phi_i(p_{i+1}) + \phi_i(p_i + (0,n_{i2})) - \phi_i(p_i) = \phi_i(p_{i+1}+(0,n_{i2}))
$ copies of  $Y$. The number of $Y$s produced is
greater than $\vp_L(\vb{p}_{i+1}+(0,n_i))$, since $\vp_L < \vp_i$ on
line $L$.  Thus $\C$ cannot be output-monotonic.
\end{proof}

\section{Proof of Lemma \ref{lem:line_offset}}
\label{app:technical1proof}
\begin{proof}
Recall that $\vp_i(\n)=\la \a_i,\n\ra+a_{0,i}$. Consider the line $J$ defined by all $\n$ such that $\la \a_i-\a_j,\n\ra=0$. Notice that the set of lines $\{\la \a_i-\a_j,\n\ra+\kappa:\kappa\in\Q\}$ are parallel. Therefore, if $L$ is parallel to $J$ then the hypothesis holds, so suppose not. In this case, for large enough $\n$, either $\la \a_i-\a_j,\n\ra>0$ or $\la \a_i-\a_j,\n\ra<0$ for all $\n\in L$. This implies that 
\begin{equation}
    \label{eq:lem_equal_coefficients1}
    \la \a_i-\a_j,\n\ra \xrightarrow[\n\text{ along }L]{\n\to\boldsymbol{\infty}}\infty,\quad\text{or}\quad \la \a_j-\a_i,\n\ra \xrightarrow[\n\text{ along }L]{\n\to\boldsymbol{\infty}}\infty.
\end{equation}

Now, we consider two cases based on the slope of $L$. First, suppose that $L$ has non-zero slope, i.e., is not a horizontal line. Let $\c(\n)$ be as in the definition of constant distance. We may assume that either  $\c(\n)\geq (0,0)$ for all $\n$ or  $\c(\n)\leq (0,0)$. Assume for now it is the former. 
We claim that we can find another bounded sequence, $\{\c(\n)'\}_{n\in L}$, such that for all $\n\in L$, $\n+\c(\n)'\in\dom_i$ and $\c(\n)'\leq 0$. Consider sliding each point of $\n\in L$ to a point further up on $L$, say $\n'$. Let $\c(\n)'$ satisfy $\n'+\c(\n)'=\n+\c(\n)\in\dom_i$. If $n'_1\geq n_1+c(\n)_1$ and $n'_2\geq n_2+c(\n)_2$, then we must have $\c(\n)'\leq (0,0)$. Moreover, if for all $\n$, $\n'$ is a constant distance away, then $\c(\n)'$ is bounded (by some function of $K$). Write $(K',K')\leq \c(\n)'\leq (0,0)$. 

Since $f$ is non-decreasing, we have that $\vp_j(\n)\leq \vp_i(\n+\c(\n))$ and $\vp_i(\n'+\c(\n'))\leq \vp_j(\n)$. 
The former implies that $\la \a_j-\a_i,\n\ra \leq a_{0,i}-a_{0,j}+\beta K$ for some constant $\beta$, and the latter that $\la \a_i-\a_j,\n\ra \leq a_{0,j}-a_{0,i}+\beta' K'$. However, this contradicts \eqref{eq:lem_equal_coefficients1}. This demonstrates that $L$ is parallel to $J$, so $\la \a_i-\a_j,\n\ra =\kappa$ for some $\kappa$. Moreover, $\kappa\in\Q$ since $\a_i,\a_j\in\Q^2$.  

If, on the other hand $L$ is a horizontal line, this implies that the lower boundary of both $\dom_i$ and $\dom_j$ are themselves lines (if they had positive slope, $L$ would not be a constant distance from either). Moreover, both $\dom_i$ and $\dom_j$ are two-way-infinite, thus they cannot be bounded above by any horizontal line. This implies that infinitely many points on $\dom_j$ are above the lower boundary of $\dom_i$ and vice versa. Hence, we can find two lines non-parallel lines with non-zero slope, $L_1$ and $L_2$. Applying what was proved above, we see that $\la \a_i,\n\ra =\la \a_j,\n\ra +\kappa_1$ for $\n\in L_1$ and $\la \a_i,\n\ra =\la \a_j,\n\ra +\kappa_2$ for $\n\in L_2$. However, if $L_1$ and $L_2$ are not parallel, this is possible iff $\a_i=\a_j$. 
\end{proof}

\section{Proof of Lemmas \ref{lem:equal_coefficients} and \ref{lem:equal_functions}}
\label{app:technical2proof}
\begin{proof}
We begin with Lemma \ref{lem:equal_coefficients}. Suppose that $\a_i\neq\a_j$. Consider the line $J$ consisting of those $\n$ such that $\la \a_i-\a_j,\n\ra=0$. $J$ is not parallel to either $I$ or $L$. Without loss of generality assume it is $L$. Thus, for large enough $\n\in L$, either $\la \a_i-\a_j,\n\ra>0$ or $\la \a_i-\a_j,\n\ra<0$. As in the proof of Lemma \ref{lem:line_offset}, taking limits as $\n$ gets large contradicts the existence of constant $\kappa$ such that $\la \a_i-\a_j,\n\ra=\kappa$. However, such a value is given by Lemma \ref{lem:line_offset}, which is thus a contradiction. 

As for Lemma \ref{lem:equal_functions}, we may consider an alternate representation of $f$ in which $\vp_i$ and $\vp_j$ are defined on $D$. The first part of the proof then dictates that $\a_i=\a_j$. Combined with the fact that $\vp_i(\n)=\vp_j(\n)$ for $\n\in D$ implies that $\vp_i=\vp_j$ everywhere. 
\end{proof}

\section{Proof of Claim \ref{claim:nonnegative-deficit}}
\label{app:proof_nonnegative-deficit}
\begin{proof}
Let $\z_p$ be the number of $(Z_1,Z_2)$ already consumed at the last
time that the deficit is cleared before $Z$-consumption stalls,
plus the number of $Z_1$ or $Z_2$ that are reactants of this last deficit clearing reaction, if any (i.e., $r_1$ if the
reaction is (\ref{eqn:y3}) or $r_2$ if the reaction is  (\ref{eqn:y4})).
Let $\n_p=(n_{1p},n_{2p})$ be the counts of the inputs $(X_1,X_2)$ that have been consumed at this time. 
The deficit is nonnegative when $Z$-consumption stalls if 
$f(M^{-1}({\bf z}_s)) \ge f(M^{-}({\bf z}_p))$. We will show that  $f(\n_p) \ge f(M^{-1}({\bf z}_p))$. Since 
$f(M^{-1}({\bf z}_s)) = f(\n)$ by Claim \ref{claim:equal}, since $\n \ge \n_p$ and since $f$ is increasing on integer-valued points, we then have
\[
f(M^{-1}({\bf z}_s)) = f(\n) \ge f(\n_p) \ge f(M^{-}({\bf z}_p)).
\]

To show that $f(\n_p) \ge f(M^{-1}({\bf z}_p))$, 
first suppose that ${\bf n}_p$ is on a fissure line.  Note that ${\bf n}_p$ and $M^{-1}({\bf z}_p)$ must be on the same fissure line, since deficit-clearing reactions can happen only in this case.  Now suppose that the deficit-clearing reaction applied is (\ref{eqn:y2}), i.e., ${\bf x} = \n_p$, ${\bf z} =M^{-1}({\bf z}_p)$ and both $\bx$ and $\z$ are on the same fissure line. So
\[
f(\n_p) = \min(\vp_A({\n_p}),\vp_B({\n_p})) -d_{\lx} \ge \min\{z_{1p},z_{2p} \} - d_{\lx} = f(M^{-1}({\bf z}_p)).
\]
Next suppose that the deficit-clearing reaction applied is (\ref{eqn:y3}).
Now, the line $l$ containing $M^{-1}(\z_p)$ must be such that
$l \ge K$; otherwise the condition that $l = \lx \mod K$ would not
hold.
Then by our choice of $K$, which is at least $k + d_{\max}$, it must be that
\[
(\vp_A(\n_p),\vp_B(\n_p)) \ge (z_{1p},z_{2p}+d_{\max}).
\]
Intuitively, this is because to "get back" to $\n_p$ from $M^{-1}(\z_p)$
requires consuming at least $d_{\max}$ more $Z_2$s.
Also $z_{1p}-z_{2p} = l > K$, so
\[
\vp_A(\n_p) \ge z_{1p} \ge z_{2p}+K \ge \min\{z_{1p},z_{2p}\}+(k+d_{\max})
\]
and so
\[
\min\{\vp_A(\n_p),\vp_B(\n_p)\} \ge \min\{z_{1p},z_{2p}\}+d_{\max}.
\]
Therefore
\[
f(\n_p) = \min\{\vp_A(\n_p),\vp_B(\n_p)\} - d_l
\ge \min\{z_{1p},z_{2p}\}+d_{\max} - d_l \ge \min\{z_{1p},z_{2p}\}
= f(M^{-1}(\z_p).
\]

Otherwise $\n_p$ is not on a fissure line (although $\lx$ might be the index of a fissure line). In this case $f(\n_p) = \min(\vp_A(\n_p),\vp_B(\n_p))$.
Then, regardless of which deficit-clearing reaction is applied, we have that
\[
f(\n_p) = \min(\vp_A(\n_p),\vp_B(\n_p)) \ge \min\{z_{1p},z_{2p} \} \ge f(M^{-1}({\bf z}_p)).
\]
Thus in every case we have that $f(\n_p) \ge f(M^{-1}(\z_p))$, and we are done.
\end{proof}

\section{Proof of Lemma \ref{lem:stiching}}
\label{app:proof_stiching}
{\bf Producing the $\nj{j}$:}
For simplicity of notation, fix some $j$, let $\dom' = \dom_j$,
let the base vectors of $\dom'$ be $(a_1',0)$ and $(0,a_2')$, and let
$\offset' = (o_1',o_2')$ be its offset.
Let $\nj{j} = \n' = (n_1',n_2')$ be the smallest
element of $\dom_j$ such that $\n \le \n'$.
The following CRN $C'$, which has nine reactions in total,
produces $n'_1$ copies of species $X_1'$ and $n'_2$ copies of
species $X_2'$ from a leader $L'$:
\[
\begin{array}{lll}
L' & \rightarrow L'_{00} + o_1 X_1' + o_2 X_2' & \\[.1in]
L'_{0b} + (o_1'+1) X_1 & \rightarrow L'_{1b} + a_1' X_1', & \mbox{ for $b=0$ and $b=1$} \\
L'_{b0} + (o_2'+1) X_2 & \rightarrow L'_{b1} + a_2' X_2', & \mbox{ for $b=0$ and $b=1$} \\
L'_{1b} + a_1' X_1 & \rightarrow L'_{1b} + a_1' X_1', & \mbox{ for $b=0$ and $b=1$} \\
L'_{b1} + a_2' X_2 & \rightarrow L'_{b1} + a_2' X_2', & \mbox{ for $b=0$ and $b=1$}
\end{array}
\]
To produce all inputs $\nj{1},\ldots, \nj{m}$, $m$ copies of CRN $C'$
are needed, each with independent copies of the species;
for example, species $X^{(j)}_1$ and $X^{(j)}_2$
are substituted for $X_1'$ and $X_2'$.
Three additional reactions produce
the leaders and input copies needed for each of these $m$ independent
copies of $C'$:
\[
\begin{array}{ll}
L & \rightarrow L'_1 + \ldots L'_m \\
X_1 & \rightarrow X^{(1)}_1 + \ldots X^{(m)}_1 \\
X_2 & \rightarrow X^{(1)}_2 + \ldots X^{(m)}_2.
\end{array}
\]


\noindent
{\bf Producing the output $y$:}
Let $Y_j$ be the output species of the output-oblivious CRN for $f_j$,
so that its count $y_j = f_j(\nj{j})$.
To produce $y = \min\{y_1, \ldots, y_m\}$ we need just one reaction:
\[
Y_1 + \ldots + Y_m \rightarrow Y.
\]
This completes the contruction for the case that all $\dom_j$ are 2D grids.

We now modify CRN $C$ with additional reactions to handle the case
that some of the $\dom_j$ may be 0D grids (i.e., points) or 1D grids.
The problem that can arise is that for some $j$,
$\dom' = \dom_j$ may not have any
point that is greater than input $\n$.
We add reactions
that generate a species $S'$ when this is the case.
Specifically, if $\dom'$ contains no base vector of the form $(a_1',0)$,
we add
\[
L'_{1b} \rightarrow S',  \mbox{ for $b=0$ and $b=1$},
\]
and we remove reactions above that involve $a_1'$.
Similarly, if $\dom'$ contains no base vector of the form $(0,a_2')$,
we add
\[
L'_{b1} \rightarrow S',  \mbox{ for $b=0$ and $b=1$}.
\]
and we remove reactions above that involve $a_2'$.
Note that $S'$ is produced if and only if
$\dom'$ has no point $\n'$ such that $\n \le \n'$.

As before, $C$ contains distinct copies of these reactions for each
$\dom_j$ that is not 2D, producing one or two copies of species $S_j$ if and
only if $\dom_j$ has no point $\nj{j}$ such that $\n \le \nj{j}$.

Finally, we need to ensure that the ouput $y$ produced is
the min of the $y_j$ taken over domains $\dom_j$ for which
$\nj{j} \in \dom_j$. To do this, we replace the single
reaction $Y_1 + \ldots + Y_m \rightarrow Y$ by
$2^m$ new reactions, one for each subset $I$ of $\{1,\ldots,m\}$.
The reaction corresponding to subset $I$ is of the form
\[
Y_1' + \ldots + Y_m' \rightarrow Y + Y_1'' + \ldots + Y_m'',
\]
where $Y_j' = Y_j$ if $j \in I$ and $Y_j' = S_j$ otherwise, and $Y_j'' $ is some inactive species if $j\in I$ and $Y_j'' = S_j$ otherwise.

\section{Non-increasing functions are not output-oblivious}
\label{app:non_increasing_functions}
Let $\C$ be a CRN stably computing $f$, and let $\outp=\{Y\}$. 
If $f:\N^2\to\N$ is non-increasing, then there exists $\n_1\leq\n_2$ such that $f(\n_1)>f(\n_2)$. Consider running $\C$ with $\n_2$ copies of the input. There exists a fair execution sequence in which $\n_1$ copies are first consumed by $\C$ which produces $f(\n_1)$ output species. After the remaining $\n_2-\n_1$ species are consumed, $\C$ will have to re-consume $f(\n_1)-f(\n_2)>0$ copies of $Y$. Hence $\C$ is not output-monotonic.

\end{document}